
\documentclass{article}

\usepackage{microtype}
\usepackage{graphicx}
\usepackage{booktabs} 

\usepackage{caption}
\usepackage{subcaption}
\usepackage{comment}
\usepackage{tabularx}
\usepackage{amsmath}
\usepackage{amsthm}
\usepackage{amsfonts}
\usepackage{mathtools}
\usepackage{hhline}
\usepackage[dvipsnames]{xcolor}
\usepackage{amsmath}
\usepackage{makecell}
\usepackage{xr-hyper}
\usepackage{hyperref}
\usepackage[shortlabels]{enumitem}

\usepackage{algorithm}
\usepackage[noend]{algpseudocode}
\newcommand*\Let[2]{\State #1 $\gets$ #2}
\newcommand{\algorithmicbreak}{\textbf{break}}

\DeclareMathOperator*{\argmax}{argmax} 

\newtheorem{theorem}{Theorem}

\newtheorem{lemma}{Lemma}






\usepackage[accepted]{icml2021}

\icmltitlerunning{Multi-Agent Training beyond Zero-Sum with Correlated Equilibrium Meta-Solvers}
\begin{document}

\twocolumn[
\icmltitle{Multi-Agent Training beyond Zero-Sum with Correlated Equilibrium Meta-Solvers}


\icmlsetsymbol{equal}{*}

\begin{icmlauthorlist}
\icmlauthor{Luke Marris}{deep,ucl}
\icmlauthor{Paul Muller}{deep,uge}
\icmlauthor{Marc Lanctot}{deep}
\icmlauthor{Karl Tuyls}{deep}
\icmlauthor{Thore Graepel}{deep,ucl}
\end{icmlauthorlist}

\icmlaffiliation{deep}{DeepMind}
\icmlaffiliation{ucl}{University College London}
\icmlaffiliation{uge}{Université Gustave Eiffel}

\icmlcorrespondingauthor{Luke Marris}{marris@google.com}

\icmlkeywords{Machine Learning, ICML}

\vskip 0.3in
]



\printAffiliationsAndNotice{}  

\begin{abstract}
Two-player, constant-sum games are well studied in the literature, but there has been limited progress outside of this setting. We propose Joint Policy-Space Response Oracles (JPSRO), an algorithm for training agents in n-player, general-sum extensive form games, which provably converges to an equilibrium. We further suggest correlated equilibria (CE) as promising meta-solvers, and propose a novel solution concept Maximum Gini Correlated Equilibrium (MGCE), a principled and computationally efficient family of solutions for solving the correlated equilibrium selection problem. We conduct several experiments using CE meta-solvers for JPSRO and demonstrate convergence on n-player, general-sum games.
\end{abstract}

\section{Introduction}

Recent success in tackling two-player, constant-sum games \cite{silver2016_mastering,vinyals2019_starcraft} has outpaced progress in n-player, general-sum games despite a lot of interest \cite{jaderberg2019_ctf,openai2019_dota,brown2019_poker,lockhart2020_bridge,gray2020_nopress,anthony2020_nopress}. One reason is because Nash equilibrium (NE) \cite{nash1951_neq} is tractable and interchangeable in the two-player, constant-sum setting but becomes intractable~\cite{daskalakis2009_ne_complexity} and potentially non-interchangeable\footnote{That is, there are no longer any guarantees on the expected utility when each player plays their part of some equilibrium; guarantees only hold when all players play {\it the same} equilibrium. Since players cannot guarantee what others choose, they cannot optimize independently, so the Nash equilibrium loses its appeal as a prescriptive solution concept.} in n-player and general-sum settings. The problem of selecting from multiple solutions is known as the equilibrium selection problem \cite{goldberg2013_selection_complexity,avis2010_enumeration,harsanyi1988_eq_selection}.\footnote{The equilibrium selection problem is subtle and can have various interpretations. We describe it fully in Section~\ref{sec:properties_selection} based on the classical understanding from ~\cite{harsanyi1988_eq_selection}.}

Outside of normal form (NF) games, this problem setting arises in multi-agent training when dealing with empirical games (also called meta-games), where a game payoff tensor is populated with expected outcomes between agents playing an extensive form (EF) game, for example the StarCraft League \cite{vinyals2019_starcraft} and Policy-Space Response Oracles (PSRO) \cite{lanctot2017_psro}, a recent variant of which reached state-of-the-art results in Stratego Barrage~\cite{mcaleer2020_pipeline}.

In this work we propose using correlated equilibrium (CE) \cite{aumann1974_ce} and coarse correlated equilibrium (CCE) as a suitable target equilibrium space for n-player, general-sum games\footnote{We mean games (also called environments) in a very general sense: extensive form games, multi-agent MDPs and POMDPs (stochastic games), imperfect information games, are all solvable with this approach.}. The (C)CE solution concept has two main benefits over NE; firstly, it provides a mechanism for players to correlate their actions to arrive at mutually higher payoffs and secondly, it is computationally tractable to compute solutions for n-player, general-sum games \cite{daskalakis2009_ne_complexity}. We provide a tractable approach to select from the space of (C)CEs (MG), and a novel training framework that converges to this solution (JPSRO). The result is a set of tools for theoretically solving any complete information\footnote{Payoffs for all players are required for the correlation device.} multi-agent problem. These tools are amenable to scaling approaches; including utilizing reinforcement learning, function approximation, and online solution solvers, however we leave this to future work.

In Section~\ref{sec:preliminaries} we provide background on a) correlated equilibrium (CE), an important generalization of NE, b) coarse correlated equilibrium (CCE)~\cite{moulin1978_cce}, a similar solution concept, and c) PSRO, a powerful multi-agent training algorithm. In Section~\ref{sec:mgce_computation} we propose novel solution concepts called Maximum Gini (Coarse) Correlated Equilibrium (MG(C)CE) and in Section~\ref{sec:properties} we thoroughly explore its properties including tractability, scalability, invariance, and a parameterized family of solutions. In Section~\ref{sec:jpsro} we propose a novel training algorithm, Joint Policy-Space Response Oracles (JPSRO), to train policies on n-player, general-sum extensive form games. JPSRO requires the solution of a meta-game, and we propose using MG(C)CE as a meta-solver. We prove that the resulting algorithm converges to a normal form (C)CE in the extensive form game. In Section~\ref{sec:joint_meta_solvers} we conduct an empirical study and show convergence rates and social welfare across a variety of games including n-player, general-sum, and common-payoff games.

An important area of related work is $\alpha$-Rank \cite{omidshafiei2019_alpharank} which also aims to provide a tractable alternative solution in normal form games. It gives similar solutions to NE in the two-player, constant-sum setting, however it is not directly related to NE or (C)CE. $\alpha$-Rank has also been applied to ranking agents and as a meta-solver for PSRO \cite{muller2020_alpharankpsro}. MG(C)CE is inspired by Maximum Entropy Correlated Equilibria (MECE) \cite{ortix2007_mece}, an entropy maximizing CE based on Shannon's entropy that is harder to compute than Gini impurity.

Another important area of related work concerns optimization based approaches~\cite{stengel2008_efce,miroslav2012_efce,farina2019_coarse}
and no-regret approaches~\cite{celli2019_multiplayer,celli2020_noregret,morrill2020_hindsight}. These approaches identify specific subsets or supersets of (C)CE in the extensive-form game by constructing constraint programs or by local regret minimization using the full representation of the information state space. In contrast, the oracle approach can iteratively identify meta-games with smaller support that summarize the strategic complexity of the game compactly.


\section{Preliminaries}
\label{sec:preliminaries}

This section introduces correlated equilibrium and the multi-agent training algorithm PSRO.

\subsection{Correlated Equilibrium}
\label{sec:preliminaries_ce}

Each player, $p$, in a game has a set of actions $a_p \in \mathcal{A}_p$ (also known as pure strategies) available to it. Let $n$ be the number of players in a game. Let $\mathcal{A} = \otimes_{p} \mathcal{A}_p$ be the joint action space and $a = (a_1, ..., a_n) \in \mathcal{A}$ be a joint action.

Let us index quantities relating to all players apart from player $p$ as $-p = \{1, ..., p-1, p+1, ..., n\}$. Let $\sigma(a) = \sigma(a_p, a_{-p})$ be the probability that joint action $a \in \mathcal{A}$ is played in a game. Let $\sigma(a_p) = \sum_{a_{-p} \in \mathcal{A}_{-p}} \sigma(a_p, a_{-p})$ be the marginal probability of player $p$ taking action $a_p \in \mathcal{A}_p$. Let $\sigma(a_{-p}|a_p)$ be the conditional probability that other players play $a_{-p} \in \mathcal{A}_{-p}$, when player $p$ plays $a_p \in \mathcal{A}_p$. $\sigma$ without arguments should be interpreted as a vector of size $[|\mathcal{A}|]$.

Let $G_p: \mathcal{A} \to \mathbb{R}$ be the payoff function for player $p$ when players play the joint action $a \in \mathcal{A}$. The full game payoff $G$ can therefore be defined by a tensor of shape $[n, |\mathcal{A}_1|, ..., |\mathcal{A}_n|]$. A normal form game is defined by the tuple $\mathcal{G} = (G, \mathcal{A})$.

Define $A_p(a'_p, a_p, a_{-p}) = G_p(a'_p, a_{-p}) - G_p(a_p, a_{-p})$ as the advantage of player $p$ switching action from $a_p$ to $a'_p$, when other players play $a_{-p}$. This can be represented as a matrix, $A_p$, of shape $[|\mathcal{A}_p| (|\mathcal{A}_p| - 1), |\mathcal{A}|]$, since we do not need to compare an action with itself. The matrix is sparse and a fraction of $\frac{1}{\mathcal{A}_p}$ elements are non-zero. We use $A$, with shape $[\sum_p |\mathcal{A}_p| (|\mathcal{A}_p| - 1), |\mathcal{A}|]$, to denote the concatenation of $A_p$ into a two-dimensional matrix.

A correlated equilibrium (CE), is a joint mixed strategy $P(a)$ such that no player $p$ has payoff to gain from unilaterally choosing to play another action $a'_p$ instead of $a_p$. An approximate correlated equilibrium ($\epsilon$-CE)\footnote{There are two competing definitions for approximate CE. We use the computationally convenient one (Section~\ref{supp_subsec:alt_ce}).} is one where that gain from switching actions is no more than $\epsilon$. When $\epsilon=0$, the standard CE is recovered. This relationship is described mathematically in Equation~\eqref{eq:ce_gain_cons}, $\forall p \in \mathcal{P}, a'_p \neq a_p \in \mathcal{A}_p$. In matrix form, we can simply write $A \sigma \leq \epsilon$.
\begin{equation} \label{eq:ce_gain_cons}
    \sum_{a_{-p}} \sigma(a_{-p},a_p) A_p(a'_p, a_p, a_{-p}) \leq \epsilon
\end{equation}

Together, $A_p$ and $\epsilon$ represent the CE linear inequality constraints. Mathematically these are equations of a plane, and separate the joint action space $\sigma(a)$ into half-spaces. Together these half-spaces intersect to form a convex polytope of valid CE solutions.

Of special interest are valid CEs that can factorize into their marginals $\sigma(a) = \prod_p \sigma(a_p)$, and correspond to NE solutions. All NEs are also CEs. Since an NE always exists (when there are finite players and actions) \cite{nash1951_neq}, a CE always exists. An NE is always on the boundary of the polytope for non-trivial games \cite{nau2004_geometry_ce}. Any convex combination of CEs is also a CE.

CEs provide a richer set of solutions than NEs. The maximum sum of social welfare in CEs is at least that of any NE. In particular, this allows more intuitive solutions to anti-coordination games such as chicken and traffic lights. Consider the traffic lights example; a symmetric, general-sum, two-player game consisting of two actions \emph{go}, ($G$), and \emph{wait}, ($W$). $(G, G)$ results in a crash, in $(W, W)$ no progress is made, and $(G, W)$ and $(W, G)$ result in progress for one of the players. Figure~\ref{fig:ce_traffic_lights} shows the NE and CE solution space for the traffic lights game. The mixed NE solution $(G, W) = (\frac{1}{11}, \frac{10}{11})$ is clearly unsatisfactory ($\frac{1}{121}$ crashing and $\frac{100}{121}$ waiting). One could argue that the best solution is to have players flip a coin to decide who waits and who goes. It turns out that this solution is a valid CE and is in fact the unique solution of $\min\epsilon$-MGCE, a novel solution concept that we introduce later in Section~\ref{sec:properties_family}.

\begin{figure*}[t]
\centering

\begin{minipage}[b]{0.40\textwidth}
\centering
\begin{subfigure}[b]{1.0\textwidth}
    \vskip 0pt
    \centering
    \captionsetup{width=.99\linewidth,justification=centering}
    \includegraphics[width=1.0\textwidth]{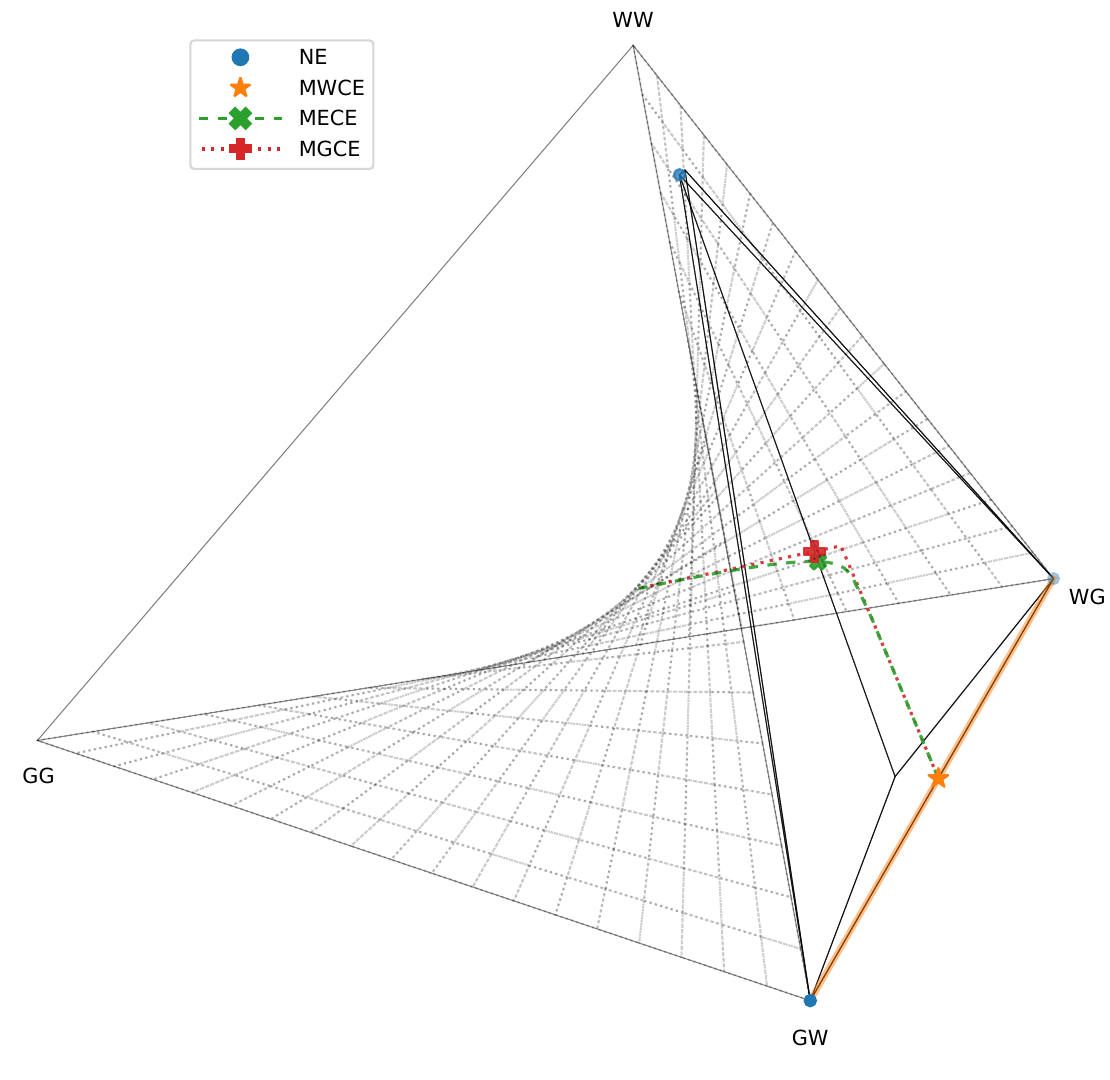}
    \caption{CE Convex Polytope}
    \label{fig:ce_tl_polytope}
\end{subfigure}
\end{minipage}
\begin{minipage}[b]{0.18\textwidth}
\centering
\begin{subfigure}[b]{1.0\textwidth}
    \vskip 0pt
    \centering
    \captionsetup{width=.99\linewidth,justification=centering}
    \begin{tabular}{r|cc}
      ~  & G         & W    \\ \hline
      G  & $-10, -10$ & $1, 0$ \\
      W  & $0, 1$    & $0, 0$
    \end{tabular}
    \caption{Payoff Table \\ \hfill}
    \label{fig:ce_tl_payoff}
\end{subfigure}

\begin{subfigure}[b]{1.0\textwidth}
    \vskip 0pt
    \centering
    \captionsetup{width=.99\linewidth,justification=centering}
    \begin{tabular}{r|cc}
         & G      & W    \\ \hline
      G  & $0.033$ & $0.334$ \\
      W  & $0.334$ & $0.299$
    \end{tabular}
    \caption{MECE \\ $(0,0)$}
    \label{fig:maxent_ce}
\end{subfigure}
\end{minipage}
\begin{minipage}[b]{0.18\textwidth}
\centering
\begin{subfigure}[b]{1.0\textwidth}
    \vskip 0pt
    \centering
    \captionsetup{width=.99\linewidth}
    \begin{tabular}{r|cc}
         & G      & W    \\ \hline
      G  & $\frac{1}{121}$ & $\frac{10}{121}$ \\
      W  & $\frac{10}{121}$ & $\frac{100}{121}$
    \end{tabular}
    \caption{Mixed NE \\ $(0,0)$}
    \label{fig:mixed_ne}
\end{subfigure}

\begin{subfigure}[b]{1.0\textwidth}
    \vskip 0pt
    \centering
    \captionsetup{width=.99\linewidth,justification=centering}
    \begin{tabular}{r|cc}
         & G      & W    \\ \hline
      G  & $0$ & $1$ \\
      W  & $0$ & $0$
    \end{tabular}
    \caption{Row NE \\ $(1,0)$}
    \label{fig:row_ne}
\end{subfigure}

\begin{subfigure}[b]{1.0\textwidth}
    \vskip 0pt
    \centering
    \captionsetup{width=.99\linewidth,justification=centering}
    \begin{tabular}{r|cc}
         & G      & W    \\ \hline
      G  & $0$ & $0$ \\
      W  & $1$ & $0$
    \end{tabular}
    \caption{Col NE \\ $(0,1)$}
    \label{fig:col_ne}
\end{subfigure}
\end{minipage}
\begin{minipage}[b]{0.18\textwidth}
\centering
\begin{subfigure}[b]{1.0\textwidth}
    \vskip 0pt
    \centering
    \captionsetup{width=.99\linewidth,justification=centering}
    \begin{tabular}{r|cc}
         & G      & W    \\ \hline
      G  & $0.033$ & $0.327$ \\
      W  & $0.327$ & $0.313$
    \end{tabular}
    \caption{MGCE \\ $(0,0)$}
    \label{fig:maxgini_ce}
\end{subfigure}

\begin{subfigure}[b]{1.0\textwidth}
    \vskip 0pt
    \centering
    \captionsetup{width=.99\linewidth,justification=centering}
    \begin{tabular}{r|cc}
         & G        & W         \\ \hline
      G  & $0$      & $0.34$    \\
      W  & $0.34$   & $0.32$
    \end{tabular}
    \caption{F-MGCE \\ $(0.34,0.34)$}
    \label{fig:full_maxgini_ce}
\end{subfigure}

\begin{subfigure}[b]{1.0\textwidth}
    \vskip 0pt
    \centering
    \captionsetup{width=.99\linewidth,justification=centering}
    \begin{tabular}{r|cc}
         & G      & W    \\ \hline
      G  & $0$ & $0.5$ \\
      W  & $0.5$ & $0$
    \end{tabular}
    \caption{MWCE \\ $(0.5,0.5)$}
    \label{fig:maxwell_ce}
\end{subfigure}
\end{minipage}

\caption{The solution landscape for the traffic lights game. The solid polytope shows the space of CE joint strategies, and the dotted surface shows factorizable joint strategies. NEs are where the surface and polytope intersect. There are three unsatisfying NEs: mixed spends most of its time waiting and does not avoid crashing, the others favour only the row or column player. One MWCE provides a better solution (note that Row NE and Col NE, and any mixture of the two are also MWCE solutions). The center of the tetrahedron is the uniform distribution and the MECE and MGCE attempt to be near this point. The dashed lines correspond to the family of solutions permitted by MGCE and MECE when varying the approximation parameter $\epsilon$. Both have $(GW, WG) = (0.5, 0.5)$ as the $\min\epsilon$ solution. Player payoffs are given in parenthesis.}
\label{fig:ce_traffic_lights}
\end{figure*}

Correlation is achieved via a trusted external entity (correlation device) which samples a joint action from a public CE joint distribution. Each player is given their action in secret. The properties of the CE means that no individual player is motivated to deviate from the suggested action. If there are deviation actions with equal payoff available, the distribution is a weak equilibrium. If instead, the suggested actions are better than alternatives, the distribution is a strict equilibrium. Distributions that produce actions that are not better than all alternatives are called approximate equilibrium, and the maximum gain that can be obtained by an agent unilaterally deviating from any suggested action is described by $\epsilon > 0$. Weak and strict equilibrium have associated gain $\epsilon=0$ and $\epsilon < 0$, respectively. Mathematically, the effect of reducing $\epsilon$ is shrinking the volume of the CE polytope. We can choose $\epsilon$ when solving for a CE and show in Section~\ref{sec:properties_family} how $\epsilon$ can be used to parameterise a family of MGCE solutions.

There are two important solution concepts in the space of CEs. The first is Maximum Welfare Correlated Equilibrium (MWCE) which is defined as the CE that maximises the sum of all player's payoffs. An MWCE can be obtained by solving a linear program, however the MWCE may not be unique and therefore does not fully solve the equilibrium selection problem (e.g. constant-sum game solutions all have equal payoff). The second such concept is Maximum Entropy Correlated Equilibrium (MECE) \cite{ortix2007_mece} which maximises Shannon's entropy \cite{shannon1948_entropy} as an objective. MECE also shares some interesting properties with MGCE such as computational scalability when the solution is full-support (positive probability mass everywhere). Drawbacks of this approach are that the literature does not provide algorithms when the solution is general-support (non-negative probability) and, maximising Shannon's entropy can be complex.

Finally, coarse correlated equilibrium (CCE)~\cite{moulin1978_cce} (Section~\ref{supp_subsec:cce}) is a simpler solution concept that contains CE as a subset: $\text{NE} \subseteq \text{CE} \subseteq \text{CCE}$. Intuitively, a game distribution is in CCE if no player wishes to deviate \emph{before} receiving a recommended signal.

The solution concepts discussed so far apply to normal form (NF) games, and therefore are sometimes prefixed as such in the literature (NFCE and NFCCE) to disambiguate them from their extensive form (EF) counterparts (EFCE \cite{stengel2008_efce} and EFCCE \cite{farina2019_coarse}). This distinction is important because although EF solutions are a natural choice in EF games; NF solutions can also be applied in EF games by using whole policies $\pi_p \in \Pi_p$ in place of actions $a_p \in \mathcal{A}_p$. These solutions are subsets of one another; $\text{NFCE} \subseteq \text{EFCE} \subseteq \text{EFCCE} \subseteq \text{NFCCE}$ \cite{stengel2008_efce}, therefore NFCE is the most restrictive correlation device while NFCCE is the least restrictive and is therefore capable of achieving the highest welfare. The best correlation device to use is a matter of debate in the literature. However, we note that NF solutions are interesting in EF games because a) it permits the highest welfare, and b) only requires communicating recommendations once before the game starts (as opposed to EF(C)CEs which require communication at every timestep). (J)PSRO trains sets of policies and converges to an NF equilibrium. Therefore, all equilibria discussed in this work are NF and we do not use a prefix going forward.

\subsection{Policy-Space Response Oracles (PSRO)}
\label{subsec:psro}

Policy-Space Response Oracles (PSRO) \cite{lanctot2017_psro} (Algorithm~\ref{alg:psro}) is an iterative population based training method for multi-agent learning that generalizes other well known algorithms such as fictitious play (FP) \cite{brown1951_fp}, fictitious self play (FSP) \cite{heinrich2015_fsp} and double oracle (DO) \cite{mcmahan2003_double_oracle}.

PSRO finds a set of policies, $(\pi_p \in \Pi_p)_{p=1..n}$, and a distribution over this set for each player, $(\sigma_p)_{p=1..n}$. The distribution converges to an NE in two-player, zero-sum games, and has recently been extended to convergence to other types of equilibria \cite{muller2020_alpharankpsro, mcaleer2021_xdo}. This work is in line with these developments, studying convergence of a variant of PSRO with joint policy distributions and (C)CE meta-solvers in n-player, general-sum games.

PSRO consists of a response oracle that estimates the best response (BR) to a joint distribution of policies. Commonly the response oracle is either a reinforcement learning (RL) agent or a method that computes the exact BR. The component that determines the distribution of policies that the oracle responds to is called the meta-solver (MS). The MS operates on the meta-game (MG), which is a payoff tensor estimated by measuring the expected return (ER) of policies against one another. This is a NF game, but instead of strategies corresponding to actions, $a$, they correspond to policies, $\pi$. The set of deterministic policies can be huge and that of stochastic policies is infinite, therefore PSRO only considers a subset of game policies: the ones found by the BR over all iterations so far. Different MSs result in different algorithms: the uniform distribution results in FSP, and using the NE distribution results in an extension of DO.


\section{MG(C)CE and its Computation}
\label{sec:mgce_computation}

The set of (C)CEs forms a convex polytope, and therefore any strictly convex function could uniquely select amongst this set. The literature only provides one such example: MECE \cite{ortix2007_mece} which has a number of appealing properties, but was found to be slow to solve large games. There is a gap in the literature for a more tractable approach, and propose to use the Gini impurity (GI) \cite{breiman1984_cart,bishop2006_pattern}. GI is a member of Tsallis entropy family, a generalized entropy that is equivalent to GI under a certain parameterization. It is maximized when the probability mass function is uniform $\sigma = \frac{1}{|\mathcal{A}|}$ and minimized when all mass is on a single outcome. GI is popular in decision tree classification algorithms because it is easy to compute \cite{breiman1984_cart}. We call the resulting solution concept maximum Gini (coarse) correlated equilibrium (MG(C)CE). This approach has connections to maximum margin \cite{cortes95_svm} and maximum entropy \cite{jaynes1957_maxent}. The derivations (Section~\ref{supp_subsec:scalable_property}) follow standard optimization theory.

\subsection{Quadratic Program}
\label{sec:computation_qp}

The Gini impurity is defined as $1 - \sigma^T\sigma$, and the MG(C)CE is denoted $\sigma^*$. We use an equivalent standard form objective $-\frac{1}{2}\sigma^T\sigma$. The most basic form of the problem can be expressed directly as a quadratic program (QP), consisting of a quadratic objective function (Equation~\eqref{eq:quad_obj}) and linear constraints (Equations \eqref{eq:ce_cons} and \eqref{eq:prob_con}).

\begin{align}
    \text{Gini objective:}& & \max_\sigma - \frac{1}{2} \sigma^T \sigma \quad &\text{s.t.} \label{eq:quad_obj} \\
    \text{(C)CE constraints:}& & A_p \sigma &\leq \epsilon \quad \forall p \label{eq:ce_cons} \\
    \text{Probability constraints:}& & \sigma \geq 0 \quad e^T\sigma &= 1 \label{eq:prob_con}
\end{align}

QPs are a well studied problem class and many techniques may be used to solve them, including convex and quadratic optimization software, such as CVXPY \cite{diamond2016_cvxpy,agrawal2018_cvxpy} and OSQP \cite{stellato2020_osqp}.

\subsection{Primal and Dual Forms}
\label{sec:computation_dual}

The primal objective that we wish to optimize is $\min_\sigma \max_{\alpha, \beta, \lambda} L(\sigma,\alpha, \beta, \lambda) = L_\sigma^{\alpha, \beta, \lambda}$, where $L_\sigma^{\alpha, \beta, \lambda}$ is the primal Lagrangian function, $\alpha_p \geq 0$ are the dual variable vectors corresponding to the $\epsilon-$(C)CE inequality constraints (Equation~\eqref{eq:ce_cons}), $\beta \geq 0$ is the dual variable vector corresponding to the distribution inequality constraints (Equation~\eqref{eq:prob_con}), and $\lambda$ is the dual variable corresponding to the distribution equality constraint (Equation~\eqref{eq:prob_con}). By augmenting the dual variables $\alpha=[\alpha_1, ..., \alpha_n]$ and constraints matrix $A = [A_1, ..., A_n]$, we can write the primal objective compactly as:
\begin{equation} \label{eq:primal_lagrangian_aug}
    L_\sigma^{\alpha_p,\beta,\lambda} = \frac{1}{2} \sigma^T\sigma + \alpha^T (A \sigma - \epsilon) - \beta^T \sigma + \lambda (e^T \sigma - 1),
\end{equation}
where the constant vector of ones with appropriate size is denoted by $e$, and $\epsilon$ is a vector populated with the approximation parameter. We can also formulate a simplified dual version of the optimization as:
\begin{align}
    L^{\alpha,\beta}
    &= - \frac{1}{2} \alpha^T A C A^T \alpha + b^T A^T \alpha - \epsilon^T \alpha -\frac{1}{2} \beta^T C \beta \nonumber \\
    &\quad - b^T \beta + \alpha^T A C \beta + \frac{1}{2} b^T b, \label{eq:dual_lagrangian_aug}
\end{align}
where $C=I - eb^T$ normalizes by the mean, and $b = \frac{1}{|\mathcal{A}|} e$ is the uniform vector. The optimal primal solution $\sigma^*$ can be recovered from the optimal dual variables $\alpha^*_p$ and $\beta^*_p$ using
\begin{equation} \label{eq:primal_from_dual}
    \sigma^* = b - CA^T\alpha^* + C \beta^*.
\end{equation}

The full-support assumption states that all joint probabilities have some positive mass, $\sigma>0$. In this scenario, the dual variable vector corresponding to the non-negative probability constraint is zero, $\beta=0$. Therefore we can define simplified primal and dual objectives.
\begin{align}
    L_\sigma^{\alpha, \lambda} &= \frac{1}{2} \sigma^T\sigma + \alpha^T (A \sigma - \epsilon) + \lambda (e^T \sigma - 1) \label{eq:primal_lagrangian_aug_full} \\
    L^\alpha &= - \frac{1}{2} \alpha^T A C A^T \alpha + b^T A^T \alpha - \epsilon^T \alpha + \frac{1}{2} b^T b \label{eq:dual_lagrangian_aug_full} \\
    \sigma^* &= b - CA^T\alpha^* \label{eq:primal_from_dual_full}
\end{align}


\section{Properties of MG(C)CE}
\label{sec:properties}

In this section we discuss some of the properties of $\epsilon$-MG(C)CE\footnote{Some of the properties discussed here also apply to MECE \cite{ortix2007_mece}.}. Section~\ref{supp_sec:properties_proofs} contains the proofs for this section.

\subsection{Equilibrium Selection Problem}
\label{sec:properties_selection}

There are two levels of coordination; first is selecting an equilibrium before play commences, and second is selecting actions during play time. Both NEs and (C)CEs require agreement on what equilibrium is being played \cite{goldberg2013_selection_complexity,avis2010_enumeration,harsanyi1988_eq_selection}: for (C)CEs this is a joint action probability distribution, and for NEs this is also a joint action probability distribution that can conveniently be factored into stochastic strategies for each player. Therefore, at this level of coordination, both NEs and (C)CEs are similar. We refer to this coordination problem as the \emph{equilibrium selection problem}~\cite{harsanyi1988_eq_selection}. At action selection time only (C)CEs require further coordination. NEs are factorizable and therefore can sample independently without further coordination. (C)CEs rely on a central correlation device that will recommend actions from the equilibrium that was previously agreed upon.

This means that neither NEs nor (C)CEs can be directly used prescriptively in n-player, general-sum games. These solution concepts specify what subsets of joint strategies are in equilibrium, but does not specify how decentralized agents should select amongst these. Furthermore, the presence of a correlation device does not make (C)CEs prescriptive because the agents still need a mechanism to agree on the distribution the correlation device samples from\footnote{This is true if the correlation device is not considered as part of the game. If it was part of the game (for example traffic lights at a junction) the solution concept can appear prescriptive.}. This coordination problem can be cast as one that is more computational in nature: what rules allow an equilibrium to be uniquely (and perhaps de-centrally) selected?

This highlights the main drawback of MW(C)CE which does not select for unique solutions (for example, in constant-sum games all solutions have maximum welfare). One selection criterion for NEs is maximum entropy Nash equilibrium (MENE) \cite{balduzzi2018_nashaverage}, however outside of the two-player constant-sum setting, these are generally not easy to compute \cite{daskalakis2009_ne_complexity}. CEs exist in a convex polytope, so any convex function can select among them. Maximum entropy correlated equilibrium (MECE) \cite{ortix2007_mece} is limited to full-support solutions, which may not exist when $\epsilon=0$, and can be hard to solve in practice. Therefore, there is a gap in the literature for a computationally tractable, unique, solution concept and this work proposes MG(C)CE fills this gap.

\begin{theorem}[Uniqueness and Existence]
MG(C)CE provides a unique solution to the equilibrium solution problem and always exists.
\end{theorem}

\subsection{Scalable Representation}
\label{sec:properties_scalable}

MG(C)CE can provide solutions in general-support and, similar to MECE, MG(C)CE permits a scalable representation when the solution is full-support. Under this scenario, the distribution inequality constraint variables, $\beta$, are inactive, are equal to zero, can be dropped, and the $\alpha$ variables can fully parameterize the solution.

\begin{theorem}[Scalable Representation]
The MG(C)CE, $\sigma^*$, has the following forms:
\begin{align}
    \text{General Support:} \quad \sigma^* &= b - C A^T \alpha^* + C \beta^* \\
    \text{Full Support:} \quad \sigma^* &= b - C A^T \alpha^*
\end{align}
Where $e$ is a vector of ones, $|\mathcal{A}|=\prod_p |\mathcal{A}_p|$, $C = I - e^T b$, and $b=\frac{1}{|\mathcal{A}|}e$ are constants. $\alpha^* \geq 0$ and $\beta^* \geq 0$ are the optimal dual variables of the solution, corresponding to the (C)CE and distribution inequality constraints respectively. 
\end{theorem}

Let $|\mathcal{A}_p|$ correspond to the number of actions available to player $p$, and the total number of joint actions, $\sigma$, is $|\mathcal{A}| = \prod_p |\mathcal{A}_p|$. For each value of $\sigma$, there is a corresponding $\beta$ dual variable. The number of $\alpha$ dual variables is no more than the number of pair permutations $\sum_p |\mathcal{A}_p| (|\mathcal{A}_p| - 1)$ for CEs or actions $\sum_p |\mathcal{A}_p|$ for CCEs. Clearly, games with three or more players and many actions, $\sum_p |\mathcal{A}_p| (|\mathcal{A}_p| - 1) \ll \prod_p |\mathcal{A}_p|$ for CEs and $\sum_p |\mathcal{A}_p| \ll \prod_p |\mathcal{A}_p|$ for CCEs, allow for a very scalable parameterization if the full-support assumption holds. Furthermore, optimal $\alpha^*$ are sparse so we can discard rows from $A$, in a similar spirit to SVMs \cite{cortes95_svm}.

For CEs, full-support is not possible when an action is strictly dominated by another. This case can be easily mitigated by iterated elimination of strictly dominated strategies (IESDS) \cite{fudenberg1991_game_theoy}. This also has the desirable property of simplifying the optimization. In a similar argument, when actions are repeated (having the same payoffs), only one need be retained with appropriate modifications to the optimization.

Among the set of $\epsilon$-MG(C)CE there always exists one with full-support. Note that any infinitesimal positive $\epsilon$ will permit a full-support (C)CE, but $\epsilon$-MG(C)CE does not necessarily select these. An upper bound on $\epsilon$ which permits a full-support solution is given by Theorem~\ref{th:epsilon_full}.
\begin{theorem}[Existence of Full-Support $\epsilon$-MG(C)CE]
For all games, there exists an $\epsilon \leq \max(Ab)$ such that a full-support, $\epsilon$-MG(C)CE exists. A uniform solution, $b$, always exists when $\max(Ab) \leq \epsilon$. When $\epsilon < \max(Ab)$, the solution is non-uniform. \label{th:epsilon_full}
\end{theorem}


\subsection{Family of Solutions}
\label{sec:properties_family}

$\epsilon$-MG(C)CE provides an intuitive way to control the strictness of the equilibrium via the approximation parameter, $\epsilon$, which parameterizes a family of unique solutions. Positive $\epsilon$ expands the solution set and results in a higher Gini impurity solution, at the expense of lower payoff, and approximate equilibrium. Negative $\epsilon$ shrinks the solution set to achieve a strict equilibrium and higher payoff at the expense of Gini impurity. This might also be a more robust solution \cite{wald1939_minimax,wald1945_minimax,ben2009_robust} if the payoff is uncertain.

It is worth emphasizing a set of particularly interesting solutions within this family. Firstly the standard MG(C)CE, with $\epsilon=0$, provides a weak equilibrium for non-trivial games (Theorem~\ref{th:mgce_weak}). Secondly, an edge case with positive $\epsilon$ is $\max(Ab)$-$\epsilon$-MG(C)CE which guarantees a uniform distribution solution. Converging to uniform when increasing $\epsilon$ is a desirable property (principle of insufficient reason) \cite{savage1954_foundations,sinn1980_insufficient_reason,jaynes1957_maxent}. Thirdly, note that all $\epsilon < \max(Ab)$ are guaranteed to have a non-uniform distribution (Theorem~\ref{th:epsilon_full}),  therefore, a $\frac{1}{2}\max(Ab)$-$\epsilon$-MG(C)CE could be an interesting way to regularise a MGCE towards a uniform distribution. Fourthly, because our algorithms are particularly scalable when full-support, working out the minimum $\epsilon$ such that a full-support solution exists, $\text{full}$-$\epsilon$-MG(C)CE, would be useful. Finally, the solution with the smallest feasible $\epsilon$ is the $\min\epsilon$-MG(C)CE. This solution has the lowest entropy of the family, but the highest payoff, and constitutes the strictest equilibrium. Refer to Figure~\ref{fig:ce_traffic_lights} for the family of solutions for the traffic lights game.
\begin{theorem} \label{th:mgce_weak}
    For non-trivial games \cite{nau2004_geometry_ce}, the MG(C)CE lies on the boundary of the polytope and hence is a weak equilibrium.
\end{theorem}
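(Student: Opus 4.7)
The plan is to leverage the strict concavity of the Gini objective $f(\sigma) = -\tfrac{1}{2}\sigma^T\sigma$, whose unique unconstrained maximizer on the probability simplex is the uniform joint distribution $b = \tfrac{1}{|\mathcal{A}|}e$. I would read the non-triviality hypothesis of \cite{nau2004_geometry_ce} concretely as saying that $b$ does not lie in the strict interior of the (C)CE polytope; equivalently, at least one constraint row satisfies $(Ab)_i \geq 0$. Matching this reading against the precise definition used by Nau et al.\ is the one delicate step of the argument and the obstacle I would want to pin down first, since without it the theorem need not hold (a game in which $b$ is itself a strict CE would place the MG(C)CE in the interior).

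Granting that, the proof splits into two cases. If $b$ lies on the boundary of the polytope, then $b$ is feasible and globally optimal for $f$ over the simplex, so $\sigma^* = b$ already sits on the boundary and is a weak equilibrium. Otherwise $b$ is strictly infeasible, and I would argue by contradiction: suppose $\sigma^*$ is a strict equilibrium, so $A\sigma^* < 0$ componentwise. Consider the direction $d = b - \sigma^*$. The segment $\sigma^* + t d$ remains in the simplex for $t \in [0,1]$, and continuity of $\sigma \mapsto A\sigma$ together with the strict inequality at $\sigma^*$ gives $A(\sigma^* + t d) < 0$ for all sufficiently small $t>0$, so $d$ is a feasible direction at $\sigma^*$.

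The directional derivative of $f$ at $\sigma^*$ along $d$ is
\begin{equation*}
    -\sigma^{*T}(b - \sigma^*) \;=\; \sigma^{*T}\sigma^* - \sigma^{*T} b.
\end{equation*}
Using $e^T\sigma^* = 1$, one has $\sigma^{*T} b = \tfrac{1}{|\mathcal{A}|}$, and by Cauchy--Schwarz $\sigma^{*T}\sigma^* \geq \tfrac{(e^T\sigma^*)^2}{|\mathcal{A}|} = \tfrac{1}{|\mathcal{A}|}$ with equality only at $\sigma^* = b$. Because $b$ is infeasible we have $\sigma^* \neq b$, so this derivative is strictly positive, meaning a small step along $d$ strictly increases $f$ while preserving feasibility, contradicting optimality of $\sigma^*$. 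Hence at least one (C)CE inequality must be tight at $\sigma^*$, placing it on the boundary of the polytope and certifying a weak equilibrium; uniqueness of $\sigma^*$ itself has already been recorded in the uniqueness theorem.
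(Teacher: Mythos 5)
Your central argument --- that maximizing $-\tfrac{1}{2}\sigma^T\sigma$ on the probability simplex is equivalent to Euclidean projection of the uniform point $b$ onto the polytope, so that an infeasible $b$ forces the optimum onto the boundary via the strictly positive directional derivative along $b-\sigma^*$ --- is correct, and it is in fact \emph{more} rigorous than the paper's own treatment of this case, which simply asserts that ``MG(C)CE is attempting to be near the uniform distribution'' and concludes boundary membership without the Cauchy--Schwarz computation you supply. The computation itself checks out: $\sigma^{*T}b = 1/|\mathcal{A}|$, $\sigma^{*T}\sigma^* \geq 1/|\mathcal{A}|$ with equality only at $b$, and feasibility of small steps toward $b$ follows from convexity of the simplex and continuity of $\sigma \mapsto A\sigma$.

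The one genuine issue is exactly the premise you flag: Nau et al.'s non-triviality does not say that $b$ fails to be strictly interior; it says (roughly) that the (C)CE polytope is a proper subset of the simplex, and their theorem is that Nash equilibria of such games lie on the boundary of that polytope. The paper closes your open sub-case through that theorem: if $b$ \emph{is} a (C)CE, it factorizes into uniform marginals, hence is a Nash equilibrium, hence lies on the boundary for non-trivial games by the cited result. You can also close it directly, which dispels your worry that ``$b$ might be a strict CE'': evaluated at $b$, the CE constraint rows for the deviations $a_p \to a'_p$ and $a'_p \to a_p$ are exact negatives of one another (both joint probabilities equal $1/|\mathcal{A}|$), and for CCE the rows $\{(Ab)_{a'_p}\}_{a'_p \in \mathcal{A}_p}$ sum to zero for each player; either way $\max(Ab) \geq 0$, so $b$ can never satisfy all incentive constraints strictly. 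This is consistent with Theorem~\ref{th:epsilon_full}. With that observation in hand, your two cases are exhaustive and the proof is complete; the only residual role of non-triviality is to rule out the degenerate situation where the tight constraints are identically zero and the polytope is the whole simplex, in which case ``boundary'' loses its meaning even though the equilibrium is still weak in the zero-gain sense.
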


Since the $\epsilon$ is deterministically known for the $\max(Ab)\epsilon$-MG(C)CE, $\frac{1}{2}\max(Ab)\epsilon$-MG(C)CE and MG(C)CE solutions, we can solve for these using the standard solvers discussed in Section~\ref{sec:mgce_computation}. For the $\min\epsilon$-MG(C)CE we can tweak our optimization procedure to solve for this case directly by simply including a $c\epsilon$ term to minimize, where $c>1$. We use bisection search to find $\text{full}$-$\epsilon$-MG(C)CE.

\subsection{Invariance}
\label{sec:properties_invariance}

An important concept in decision theory, called cardinal utility \cite{mascolell1995_micro}, is that offset and positive scale of each player's payoff does not change the properties of the game. A notable solution concept that does not have this property is MW(C)CE.
\begin{theorem}[Affine Payoff Transformation Invariance]
If $\sigma^*$ is the $\epsilon$-MG(C)CE of a game, $\mathcal{G}$, then for each player $p$ independently we can transform the payoff tensors $\tilde{G}_p = c_p G_p + d_p$ and approximation vector $\tilde{\epsilon}_p = a_p \epsilon_p$ for some positive $c_p$ and real $d_p$ scalars, without changing the solution. Furthermore rows of the advantage matrix $A$, and approximation vector, $\epsilon$, can be scaled independently without changing the MG(C)CE.
\end{theorem}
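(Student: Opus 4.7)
The plan is to reduce both claims to the observation that the (C)CE constraints $A_p \sigma \le \epsilon_p$ describe a polytope whose geometry is invariant under positive row-rescalings (with matched $\epsilon$), and that the Gini objective $-\tfrac{1}{2}\sigma^{T}\sigma$ depends only on $\sigma$, not on payoffs. So if the transformations leave the feasible set intact, the unique maximizer from the earlier uniqueness theorem is automatically unchanged.

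First, I would verify that the affine payoff transformation $\tilde G_p = c_p G_p + d_p$ (with $c_p>0$) translates into a pure positive rescaling of the player-$p$ rows of the advantage matrix. Directly from the definition,
\begin{align*}
\tilde A_p(a'_p,a_p,a_{-p}) &= \tilde G_p(a'_p,a_{-p}) - \tilde G_p(a_p,a_{-p}) \\
 &= c_p\bigl(G_p(a'_p,a_{-p}) - G_p(a_p,a_{-p})\bigr) = c_p A_p(a'_p,a_p,a_{-p}),
\end{align*}
so the additive offset $d_p$ cancels (this is the point where cardinal utility enters), and $\tilde A_p = c_p A_p$. Matching this with $\tilde\epsilon_p = c_p \epsilon_p$, the constraint $\tilde A_p \sigma \le \tilde\epsilon_p$ is, after dividing through by $c_p>0$, exactly $A_p \sigma \le \epsilon_p$. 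The probability simplex constraints (\ref{eq:prob_con}) do not involve payoffs at all, so the feasible polytope is preserved exactly.

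For the second part I would argue identically but one row at a time: scaling an individual row $A_{p,i}$ by any positive $s_i$ while scaling the corresponding $\epsilon_{p,i}$ by $s_i$ gives the equivalent half-space $A_{p,i}\sigma\le\epsilon_{p,i}$, so the polytope is unchanged. This covers the ``rows can be scaled independently'' claim in one line and also subsumes the per-player scaling as the special case $s_i=c_p$ for all rows belonging to player $p$.

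Finally, I would close by noting that the Gini objective $-\tfrac{1}{2}\sigma^{T}\sigma$ is a function of $\sigma$ alone and is unaffected by any of these transformations. Since the feasible set and the objective are both preserved, and since the earlier uniqueness theorem guarantees a single maximizer, $\sigma^{*}$ must be the $\tilde\epsilon$-MG(C)CE of the transformed game as well. There is no genuine obstacle here; the only subtlety worth flagging in the write-up is to make the cancellation of $d_p$ explicit, since that is precisely where positive-affine (as opposed to merely linear) invariance is used, and to note that the argument applies verbatim to both CE and CCE since the only change between them is the row-structure of $A$, which is irrelevant to the scaling argument.
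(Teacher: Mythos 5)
Your proof is correct, and the first half — showing $\tilde A_p = c_p A_p$ so that the offset $d_p$ cancels in the advantage — is exactly the paper's opening step. Where you diverge is in justifying why the positive scaling is harmless: you argue entirely in the primal, observing that $\tilde A_p \sigma \le c_p\epsilon_p$ and $A_p\sigma \le \epsilon_p$ cut out the identical half-spaces, so the feasible polytope and the payoff-independent objective $-\tfrac12\sigma^T\sigma$ are literally unchanged and uniqueness does the rest; the paper instead argues in the dual, noting that $A$ only ever appears paired with the dual variables $\alpha$, so a scale $\tilde A = cA$ is absorbed by $\tilde\alpha = \alpha/c$ (row-wise scales being absorbed entrywise) without changing the optimization. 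Your route is the more elementary and self-contained proof of the invariance claim as stated, and it handles the row-wise generalization in the same breath. The paper's dual-side argument buys something extra that yours does not: it exhibits how the optimal dual variables transform under the rescaling, which is what licenses the row-normalization / preconditioning trick used later for the gradient solvers (and, strictly speaking, the paper's absorption argument implicitly needs $\epsilon$ rescaled along with $A$ so that the $\epsilon^T\alpha$ term is preserved — a point your feasible-set formulation makes explicit rather than leaving tacit). One presentational nit: the theorem writes $\tilde\epsilon_p = a_p\epsilon_p$; you silently read $a_p = c_p$, which is the only reading under which either proof goes through, and it would be worth saying so.
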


\subsection{Computationally Tractable}
\label{sec:properties_tractable}

In general, finding NEs is a hard problem \cite{daskalakis2009_ne_complexity}. While solving for any valid (C)CE is simple (basic feasible solution of a linear constraint problem) \cite{matousek2006_lp}, and finding a (C)CE with a linear objective is an LP, solving for a particular (C)CE can be hard. For example, MECE \cite{ortix2007_mece} requires optimizing a constrained nonlinear objective. $\alpha$-Rank can be solved in cubic time in the number of pure joint strategies, $O(|\mathcal{A}|^3)$.

MG(C)CE, however, is the solution to a quadratic program, and therefore can be solved in polynomial time. Furthermore, if the assumption is made that the solution is full-support, the algorithm's variables scale better than the number of $\sigma$ parameters.

Space requirements are dominated by the storage of the advantage matrix $A$, which requires a space of $O(n|\mathcal{A}_p| |\mathcal{A}|)$ when exploiting sparsity. Computation is also on the order $O(n|\mathcal{A}_p| |\mathcal{A}|)$ for gradient computation, exploiting sparsity. The number of variables depends on whether we are solving the general-support, $|\mathcal{A}| + n|\mathcal{A}_p|^2$, or full-support, $n|\mathcal{A}_p|^2$ version. It is possible to make use of sparse matrix implementations and only efficient matrix-vector multiplications are required to compute the derivatives.


\section{Joint PSRO}
\label{sec:jpsro}

JPSRO (Algorithm~\ref{alg:jpsro}) is a novel extension to Policy-Space Response Oracles (PSRO) \cite{lanctot2017_psro} (Algorithm~\ref{alg:psro}) with full mixed joint policies to enable coordination among policies. Although a conceptually straightforward extension, careful attention is needed to a) develop suitable best response (BR) operators, b) develop tractable joint distribution meta-solvers (MS), c) evaluate the set of policies found so far, and d) develop convergence proofs.

Using notation of Section~\ref{sec:preliminaries_ce}, but policies instead of actions. Let $(\Pi^*_p)_{p=1..n}$ be the set of all policies of the extensive form game available for each player, and $\Pi^*=\otimes_{p}\Pi^*_p$ be the set of all joint policies. JPSRO is an iteration-based algorithm, let $\{{}^c\pi_p^t, ...\} =\Pi^t_p$ be the set of new policies found at iteration $t$ for player $p$ with $c \in \mathcal{C}$ indexing an individual policy within that set. The set of all policies found so far for player $p$ is denoted $\Pi^{0:t}_p$ and the set of joint policies is denoted $\Pi^{0:t} = \otimes_{p}\Pi^{0:t}_p$. The expected return (ER), an NF game $(G^{0:t}_p)_{p=1..n}$, is tracked for each joint policy found so far such that $G^{0:t}_p(\pi)$ is the expected return to player $p$ when playing joint policy $\pi$. We also define $G^*_p$ to be the payoff over all possible joint policies. 

The MS is a function taking in the ER and returning a joint distribution, $\sigma^t$, over $\Pi^{0:t}$, such that $\sigma^t(\pi)$ is the probability to play joint policy $\pi \in \Pi^{0:t}$ at iteration $t$. The BR operator finds a policy which maximizes the expected return over of opponent mixed joint policies, $\pi_{-p} \in \Pi^{0:t}_{-p}$. This mixture is defined in terms of the MS joint distribution, $\sigma^t$.

\begin{figure*}
\centering
\begin{minipage}[t]{0.48\textwidth}
\begin{algorithm}[H]
    \caption[PSRO]{Two-Player PSRO}
    \label{alg:psro}
    \begin{algorithmic}[1]
        \Let{$\Pi_1^0$, $\Pi_2^0$}{$\{ \pi_1^0 \}$, $\{ \pi_2^0 \}$}
        \Let{$G^0$}{ER($\Pi^0$)}
        \Let{$\sigma_1^0$, $\sigma_2^0$}{MS($G^0$)}
        \For{$t \gets \{ 1, ... \}$}
            \Let{$\pi_1^t$, $\Delta_1^t$}{BR($\Pi_2^{t-1}$, $\sigma_2^{t-1}$)}
            \Let{$\pi_2^t$, $\Delta_2^t$}{BR($\Pi_1^{t-1}$, $\sigma_1^{t-1}$)}
            \Let{$\Pi_1^t$, $\Pi_2^t$}{$\Pi_1^{t-1} \cup \{ \pi_1^t \}$, $\Pi_2^{t-1} \cup \{ \pi_2^t \}$}
            \Let{$G^t$}{ER($\Pi^t$)}
            \Let{$\sigma_1^t$, $\sigma_2^t$}{MS($G^t$)}
            \If{$\Delta_1^t + \Delta_2^t = 0$}
                \State \algorithmicbreak
            \EndIf
        \EndFor
        \Return ($\Pi_1^{0:t}$, $\Pi_2^{0:t}$), ($\sigma_1^t$, $\sigma_2^t$)
    \end{algorithmic}
\end{algorithm}
\end{minipage}
\hfill
\begin{minipage}[t]{0.48\textwidth}
\begin{algorithm}[H]
    \caption[JPSRO]{JPSRO}
    \label{alg:jpsro}
    \begin{algorithmic}[1]
        \Let{$\Pi_1^0$, ..., $\Pi_n^0$}{$\{ \pi_1^0 \}$, ..., $\{ \pi_n^0 \}$}
        \Let{$G^0$}{ER($\Pi^0$)}
        \Let{$\sigma^0$}{MS($G^0$)}
        \For{$t \gets \{ 1, ...\}$}
            \For{$p \gets \{ 1, ..., n \}$}
                \Let{$\{{}^1\pi_p^t, ...\}$, $\{{}^1\Delta_p^t, ...\}$}{BR$_p$($\Pi^{0:t-1}$, $\sigma^{t-1}$)}
                \Let{$\Pi_p^{0:t}$}{$\Pi_p^{0:t-1} \cup \{ {}^1\pi_p^t, ... \}$}
            \EndFor
            \Let{$G^{0:t}$}{ER($\Pi^{0:t}$)}
            \Let{$\sigma^t$}{MS($G^{0:t}$)}
            \If{$\sum_{p,c} {}^c\Delta_p^t = 0$}
                \State \algorithmicbreak
            \EndIf
        \EndFor
        \Return $\Pi^{0:t}$, $\sigma^t$
    \end{algorithmic}
\end{algorithm}
\end{minipage}
\end{figure*}

\subsection{Best Response Operators}
\label{subsec:joint_psro_br}

At iteration $t+1$ each set, $\Pi_p^{0:t}$, can be expanded using either using a CCE or CE best response (BR) operator. The type of BR operator used determines the type of equilibrium that JPSRO converges to (Section \ref{subsec:joint_psro_convergence}).

\paragraph{JPSRO(CCE)} \hfill \\
There is a single BR objective for each player, which expands the player policy set, $\Pi_p^{0:t+1} = \Pi_p^{0:t} \cup \Pi^{t+1}_p$, where $\Pi^{t+1}_p = \{ \text{BR}^{t+1}_p \}$, and $\sigma(\pi_{-p}) = \sum_{\pi_p \in \Pi_p^{0:t}} \sigma(\pi_p, \pi_{-p})$.
\begin{equation} \nonumber
    \text{BR}^{t+1}_p \in \argmax\limits_{\pi^*_p \in \Pi^*_p} \sum_{\pi_{-p} \in \Pi^{0:t}_{-p}} \sigma^t(\pi_{-p}) G^*_p(\pi^*_p, \pi_{-p})
\end{equation}
The CCE BR attempts to exploit the joint distribution with the responder's own policy preferences marginalized out.
    
\paragraph{JPSRO(CE)} \hfill \\
There is a BR for each possible recommendation a player can get, $\Pi_p^{t+1} = \Pi_p^{0:t} \cup \Pi_p^{t+1}$, where $\Pi_p^{t+1} = \{(\text{BR}^{t+1}_p(\pi^i_p))_{i=1..|\Pi_p^{0:t}|}\}$.
\begin{equation} \nonumber
    \text{BR}^{t+1}_p(\pi_p) \in \argmax\limits_{\pi^*_p \in \Pi^*_p} \sum_{\pi_{-p} \in \Pi^{0:t}_{-p}} \sigma^t(\pi_{-p}|\pi_p) G^*_p(\pi^*_p, \pi_{-p})
\end{equation}
Therefore the CE BR attempts to exploit each policy conditional ``slice''. In practice, we only calculate a BR for positive support policies (similar to Rectified Nash \cite{balduzzi2019_rectifiednash}. Computing the $\argmax$ of the BRs can be achieved through RL or exactly traversing the game tree.



\subsection{Meta-Solvers}
\label{subsec:joint_psro_meta_solvers}

We propose that (C)CEs are good candidates as meta-solvers (MSs). They are more tractable than NEs and can enable coordination to maximize payoff between cooperative agents. In particular we propose three flavours of equilibrium MSs. Firstly, greedy (such as MW(C)CE), which select highest payoff equilibria, and attempt to improve further upon them. Secondly, maximum entropy (such as MG(C)CE) attempt to be robust against many policies through spreading weight. Finally, random samplers (such as RV(C)CE) attempt to explore by probing the extreme points of equilibria. Note that these MSs search through the equilibrium subspace, not the full policy space, and this restriction is a powerful way of achieving convergence. Note that since $\text{CEs} \subseteq \text{CCEs}$, one can also use CE MSs with JPSRO(CCE).

\subsection{Evaluation}
\label{subsec:joint_psro_evaluation}

Measuring convergence to NE (NE Gap, \citet{lanctot2017_psro}) is suitable in two-player, constant-sum games. However, it is not rich enough in cooperative settings. We propose to measure convergence to (C)CE ((C)CE Gap in Section~\ref{subsec:metrics}) in the full extensive form game.  A gap, $\Delta$, of zero implies convergence to an equilibrium. We also measure the expected value obtained by each player, because convergence to an equilibrium does not imply a high value. Both gap and value metrics need to be evaluated under a meta-distribution. Using the same distribution as the MS may be unsuitable because MSs do not necessarily result in equilibria, may be random, or may maximize entropy. Therefore we may also want to evaluate under other distributions such as MW(C)CE, because it constitutes an equilibrium and maximizes value. A final relevant measurement is the number of unique polices found over time. The goal of an MS is to expand policy space (by proposing a joint policy to best respond to). If it fails to find novel policies at an acceptable rate, this could be evidence it is not performing well. Not all novel policies are useful, so caution should be exercised when interpreting this metric. If using a (C)CE MS and the gap is positive, it is guaranteed to find a novel BR policy.

\subsection{Convergence to Equilibria}
\label{subsec:joint_psro_convergence}

JPSRO(CCE) converges\footnote{In exponential time in the worst case, however in practice convergence is much faster.} to a CCE and JPSRO(CE) converges to a CE. We provide a sketch of the proofs here, for full proofs see Section~\ref{subsec:jpsro_proof}.

\begin{theorem}[CCE Convergence]\label{theorem:game_cce_convergence}
When using a CCE meta-solver and CCE best response in JPSRO(CCE) the mixed joint policy converges to a CCE under the meta-solver distribution.
\end{theorem}

\begin{theorem}[CE Convergence]\label{theorem:game_ce_convergence}
When using a CE meta-solver and CE best response in JPSRO(CE) the mixed joint policy converges to a CE under the meta-solver distribution. 
\end{theorem}

\begin{proof}
A (C)CE MS provides a distribution that is in equilibrium over the set of joint policies found so far, $\Pi^{0:t}$. For the algorithm to have converged, it needs to also be in equilibrium over the set of all possible joint policies, $\Pi^*$. This is the case when the BR fails to find a novel policy with nonzero gap. Policies that have been found before, by definition of (C)CE, have zero gap. All behavioural policies can be defined in terms of a mixture of deterministic policies. Therefore, given that there are finite deterministic policies the algorithm will converge.
\end{proof}


\section{CEs and CCEs as Joint Meta-Solvers}
\label{sec:joint_meta_solvers}

We evaluate a number of (C)CE MSs in JPSRO on pure competition, pure cooperation, and general-sum games (Section \ref{supp_sec:experiments}). All games used are available in OpenSpiel \cite{lanctot2019_openspiel}. More thorough descriptions of the games used can be found in Section~\ref{supp_sec:environments}. We use an exact BR oracle, and exactly evaluate policies in the meta-game by traversing the game tree to precisely isolate the MS's contribution to the algorithm.

We compare against common MS including uniform, $\alpha$-Rank \cite{omidshafiei2019_alpharank,muller2020_alpharankpsro}, Projected Replicator Dynamics (PRD) \cite{lanctot2017_psro} which is an NE approximator, and random vertex (coarse) correlated equilibrium (RV(C)CE) which randomly selects a solution on the vertices of (C)CE polytope. We also include a random joint and random Dirichlet solvers as baselines. We treat the solutions to the MSs as full joint distributions. Random solvers were evaluated with five seeds and we plot the mean. When evaluating, we measure equilibrium gaps under their own MS distribution and MW(C)CE to provide a consistent and value maximizing comparison. Experiments were ran for up to 6 hours, after which they were terminated.

Kuhn Poker \cite{kuhn1950_poker,southey2009_kuhn_equil,lanctot2014_kuhn_multi} is a zero-sum poker game with only two actions per player. The two-player variant is solvable with PSRO, however the three-player version benefits from JPSRO. The results in Figure~\ref{fig:joint_psro_kuhn} show rapid convergence to equilibrium.

Trade Comm is a two-player, common-payoff trading game, where players attempt to coordinate on a compatible trade. This game is difficult because it requires searching over a large number of policies to find a compatible mapping, and can easily fall into a sub-optimal equilibrium. Figure~\ref{fig:joint_psro_trade_comm} shows a remarkable dominance of CCE MSs. It is clear that traditional PSRO MSs cannot cope with this cooperative setting.

Sheriff \cite{farina2019_sheriff} is a two-player, general-sum negotiation game. It consists of bargaining rounds between a smuggler, who is motivated to import contraband without getting caught, and a sheriff, who is motivated to find contraband or accept bribes. Figure~\ref{fig:joint_psro_sheriff} shows that JPSRO is capable of finding the optimal value.

\begin{figure}[t!]
\centering
\begin{subfigure}[b]{1.0\linewidth}
    \vskip 0pt
    \includegraphics[width=1.\textwidth]{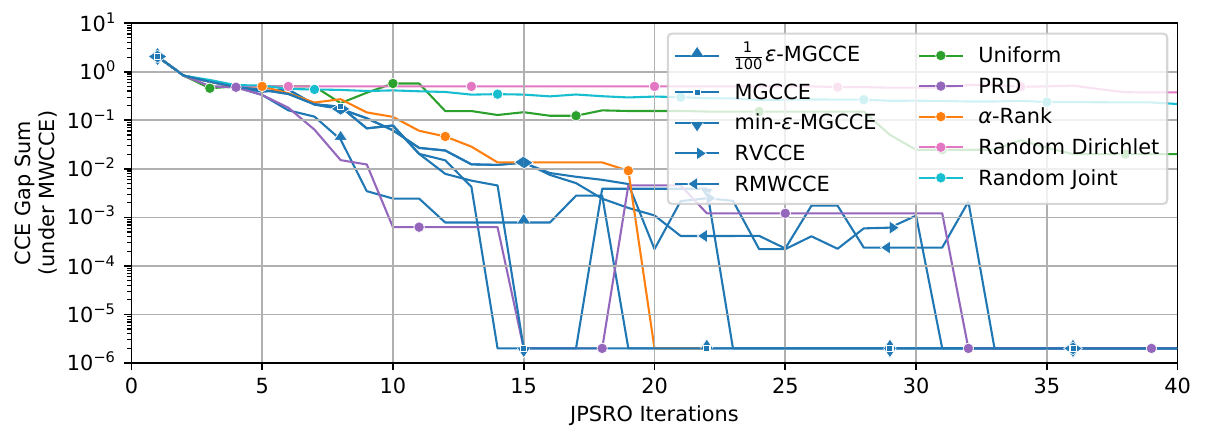}
    \caption{CCE Gap on three-player Kuhn Poker. Several MS converge to within numerical accuracy (data is clipped) of a CCE.}
    \label{fig:joint_psro_kuhn}
\end{subfigure}

\begin{subfigure}[b]{1.0\linewidth}
    \vskip 0pt
    \includegraphics[width=1.\textwidth]{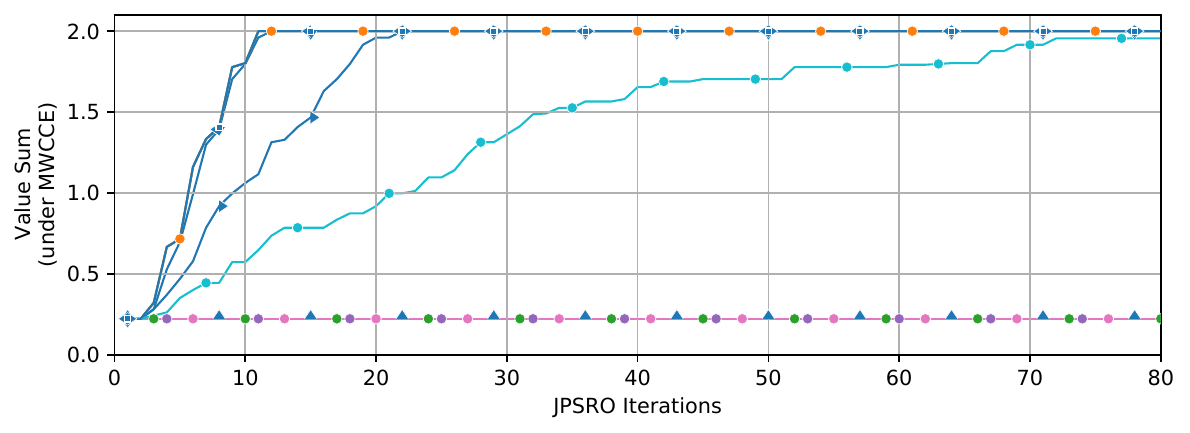}
    \caption{Value sum on three-item Trade Comm. The approximate CCE MS was not sufficient to converge in this game, however all valid CCE MSs were able to converge to the optimal value sum.}
    \label{fig:joint_psro_trade_comm}
\end{subfigure}

\begin{subfigure}[b]{1.0\linewidth}
    \vskip 0pt
    \includegraphics[width=1.\textwidth]{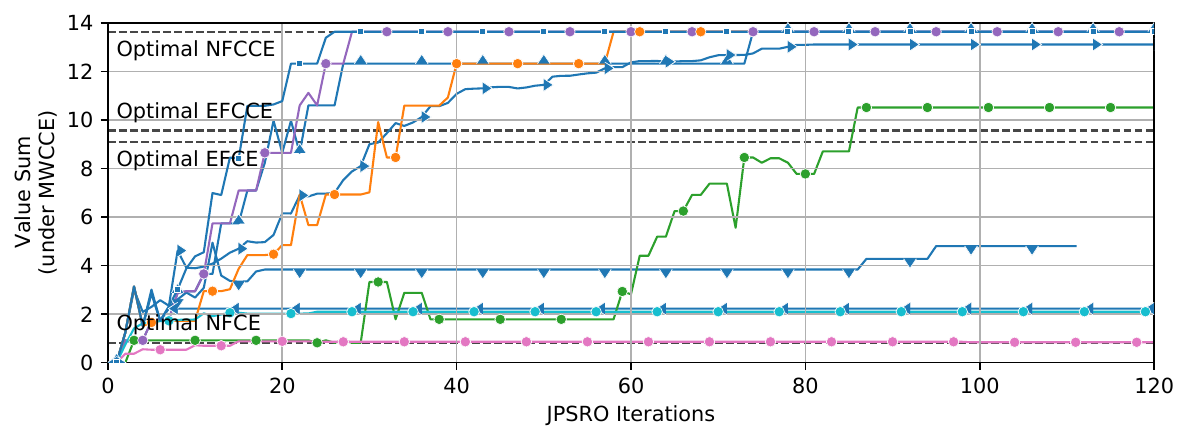}
    \caption{Value sum on Sheriff. The optimal maximum welfare of other solution concepts are included to highlight the appeal of using NFCCE.}
    \label{fig:joint_psro_sheriff}
\end{subfigure}
\caption{JPSRO(CCE) on various games. Additional metrics can be found in Section~\ref{supp_sec:experiments}. MGCCE is consistently a good choice of MS over the games tested.}
\label{fig:joint_psro}
\end{figure}


\section{Discussion}
\label{sec:discussion}



There has been significant recent interest in solving the equilibrium selection problem \cite{ortix2007_mece,omidshafiei2019_alpharank}. This paper provides a novel approach which is computationally tractable, supports general-support solutions, and has favourable scaling properties when the solution is full-support.


The new solution concept MG(C)CE is rooted in the powerful principles of entropy and margin maximisation. Therefore it is a simple solution that makes limited assumptions, and is robust to many possible counter strategies \cite{jaynes1957_maxent}. The MG(C)CE defines a family of unique solutions parameterized by $\epsilon$, that can control for the properties of the distribution. We have compared it to other NE, CE, and $\alpha$-Rank solutions, and have shown it has several advantages over these approaches, and performs very well across a variety of games.

PSRO has proved to be a formidable learning algorithm in two-player, constant-sum games, and JPSRO, with (C)CE MSs, is showing promising results on n-player, general-sum games. The secret to the success of these methods seems to lie in (C)CEs ability to compress the search space of opponent policies to an expressive and non-exploitable subset. For example, no dominated policies are part of CEs, and during execution there are no policies a player would rather deviate to. For (C)CE MSs, if there is a value-improving BR it is guaranteed to be a novel policy.

There is a rich polytope of possible equilibria to choose from, however, an MS must pick one at each time step. There are three competing properties which are important in this regard, exploitation, robustness, and exploration. For exploitation, maximum welfare equilibria appear to be useful. However, to prevent JPSRO from stalling in a local equilibrium it is essential to randomize over multiple solutions satisfying the maximum welfare criterion. To produce robust BRs, entropy maximizing MSs (such as MG(C)CE) have better empirical value and convergence than the uniform MS. For exploration, we can randomly select a valid equilibrium at each iteration which outperforms random joint and random Dirichlet by a significant margin (similar to AlphaStar's ``exploiter policies'' \cite{vinyals2019_starcraft}). Furthermore, one could also switch between MSs at each iteration to achieve the best mix of exploitation and exploration.

Another strength of (C)CE MSs is that they appear to perform well across many different games, with different numbers of players and payoff properties.

\section{Conclusions}

We have shown that JPSRO converges to an NF(C)CE over joint policies in extensive form and stochastic games. Furthermore, there is empirical evidence that some MSs also result in high value equilibria over a variety of games. We argue that (C)CEs are an important concept in evaluating policies in n-player, general-sum games and thoroughly evaluate several MSs. Finally, we believe that both MG(C)CE and JPSRO can scale to large problems, by using stochastic online MSs for the former and exploiting function approximation and RL for the latter.


\section{Acknowledgements}
\label{sec:acknowledgements}

Special thanks to Shayegan Omidshafiei for help with $\alpha$-Rank related discussions, Thomas Anthony for helpful comments and critiques of a draft of the paper, Gabriele Farina for providing maximum welfare values for the Sheriff game, and the anonymous ICML reviewers whose thoughtful feedback strengthened the paper considerably.

\bibliography{bibtex}
\bibliographystyle{icml2021}

\clearpage
\newpage
\appendix
\setcounter{theorem}{0}

\section{Correlated Equilibrium}
\label{supp_sec:ce}

In this section we define the differences between two competing definitions of approximate correlated equilibrium ($\epsilon$-CE) and define the related solution concept approximate coarse correlated equilibrium ($\epsilon$-CCE).

\subsection{Correlated Equilibrium}
\label{supp_subsec:ce}

An approximate correlated equilibrium ($\epsilon$-CE) is one where the  advantage of a single player unilaterally switching away from a recommended action is no more than $\epsilon$. When $\epsilon=0$, the standard CE is recovered. Define $A_p(a'_p, a_p, a_{-p}) = G(a'_p, a_{-p}) - G(a_p, a_{-p})$ as the advantage for player $p$ switching action from $a_p$ to $a'_p$, when other players play $a_{-p}$. This relationship is described mathematically in Equation~\eqref{eq:approx_ce_gain_cons}.
\begin{align}
    \sum_{a_{-p}} \sigma(a_{-p}| a_p) A_p(a'_p, a_p, a_{-p}) &\leq \epsilon_p \label{eq:approx_ce_gain_cons} \\
    \sum_{a_{-p}} \sigma(a_{-p}, a_p) A_p(a'_p, a_p, a_{-p}) &\leq \sigma(a_p) \epsilon_p \label{eq:approx_ce_gain_cons_comp} \\
    \sum_{a_{-p}} \sigma(a_{-p}, a_p) \left ( A_p(a'_p, a_p, a_{-p}) - \epsilon_p \right ) &\leq 0\label{eq:approx_ce_gain_cons_comp2} \\
    \forall p \in \mathcal{P}, a'_p \neq a_p \in \mathcal{A}_p \nonumber
\end{align}

Together, $A_p$ and $\epsilon$ represent the CE linear inequality constraints. Mathematically these are equations of a plane, and separate the mixed joint probability $\sigma(a)$ into half-spaces. Together these half-spaces intersect to form a convex polytope of valid CE solutions.

\subsection{Alternate Form Correlated Equilibrium}
\label{supp_subsec:alt_ce}

Sometimes another definition for CEs is used which is not equivalent to the definition above when $\epsilon \neq 0$. We call Equation~\eqref{eq:approx_ce_gain_cons_other} the alternate approximate CE. In matrix form, we can simply write $A\sigma \leq \epsilon$.
\begin{align} \label{eq:approx_ce_gain_cons_other}
    \sum_{a_{-p}} \sigma(a_{-p}, a_p) A_p(a'_p, a_p, a_{-p}) \leq \epsilon_p \\ \forall p \in \mathcal{P}, a'_p \neq a_p \in \mathcal{A}_p \nonumber
\end{align}

This form is often easier to deal with computationally (particularly with $\min\epsilon$-MG(C)CE) because it does not require dealing with a conditional distribution, and the approximation term is independent of probabilities. We use this definition throughout this work, although MGCE is still well defined using the former definition.

\subsection{Coarse Correlated Equilibrium}
\label{supp_subsec:cce}

The coarse correlated equilibrium (CCE) is a looser solution concept where a player must decide if they are going to play the correlation device's recommendation before they receive the recommendation. Define $A_p(a'_p, a) = G(a'_p, a_{-p}) - G(a)$ as the gain from deviating before an action has been recommended.
\begin{align}
    \sum_{a} \sigma(a) A_p(a'_p, a) &\leq \epsilon_p \\
    \sum_{a} \sigma(a) \left ( A_p(a'_p, a) - \epsilon_p \right ) &\leq 0 \\
    \forall p \in \mathcal{P}, a'_p \in \mathcal{A}_p \nonumber
\end{align}

Note that this can be derived from the correlated equilibrium by mixing over all possible actions, $a_p$, that an agent can take.
\begin{align*}
    \sum_{a_p} \sigma(a_p) \sum_{a_{-p}} \sigma(a_{-p}|a_p) A_p(a'_p, a_p, a_{-p}) &\leq \sum_{a_p} \sigma(a_p) \epsilon_p \\
    \sum_{a_p} \sum_{a_{-p}} \sigma(a_{-p}, a_p) A_p(a'_p, a_p, a_{-p}) &\leq \epsilon_p \\
    \sum_{a} \sigma(a_{-p}, a_p)\left( G(a'_p, a_{-p}) - G(a_p, a_{-p}) \right) &\leq \epsilon_p \\
    \sum_{a} \sigma(a) A_p(a'_p, a) &\leq \epsilon_p
\end{align*}

Note that if one wished to solve for MGCCE, simply substitute the advantage matrix $A^{CE}$ for $A^{CCE}$, without any other additional changes.

\section{Generalized Entropy}
\label{supp_sec:generalized_entropy}

Shannon's Entropy \cite{shannon1948_entropy}, $I_S$, is a familiar quantity and is described as a measure of ``information gain''. The Gini Impurity \cite{breiman1984_cart,bishop2006_pattern} is a measurement of the probability of mis-classifying a sample of a discrete random variable, if that sample were randomly classified according to its own probability mass function, $I_G = \sum_i^N \sigma_i \sum_{j \neq i} \sigma_j = 1 - \sum_i^N \sigma_i^2$.

Both Shannon's entropy and Gini Impurity are maximized when the probability mass function is uniform $\sigma_i = \frac{1}{|\mathcal{A}|}$ and minimized when all mass is on a single outcome. Both metrics are used in decision tree classification algorithms, with Gini being more popular because it is easier to compute \cite{breiman1984_cart}.

In physics, there has been recent interest in non-extensive entropies which have been found to better model certain physical properties. One such entropy is called the Tsallis entropy, $I_T = \frac{1 - \sum_i \sigma_i^q}{q-1}$, \cite{tsallis1988_entropy,havrda1967_alpha_entropy,wang2017_ent_gini_uni,kaur2019_ent_review} and is parameterized by real $q$.

A notable property of the Tsallis entropy is that it is non-additive. Assume that we have two independent variables $A$ and $B$, with joint probability $P(A,B)=P(A)P(B)$, then the combined Tsallis entropy of this system is $I_T(A,B) = I_T(A) + I_T(B) + (1-q) I_T(A) I_T(B)$. Therefore it can be seen that the $(1 - q)$ quantity is a measure of the departure from additivity, with additivity being recovered in the limit when $q \to 1$. This corresponds to the additive Shannon's entropy. The Gini impurity is recovered when $q=2$. Therefore, the Gini impurity is a non-extensive generalized entropy.

\section{Proofs of MG(C)CE Properties}
\label{supp_sec:properties_proofs}

\subsection{Uniqueness and Existence}

\begin{theorem}[Uniqueness and Existence]
MG(C)CE provides a unique solution to the equilibrium solution problem and always exists.
\end{theorem}

\begin{proof}
The problem is a concave maximization problem with linear constraints so therefore has a unique solution. Existence follows from the fact that a CE always exists.
\end{proof}

\subsection{Scalable Representation}
\label{supp_subsec:scalable_property}

\begin{theorem}[Scalable Representation]
The maximum Gini (C)CE, $\sigma^*$, has the following forms:
\begin{align}
    \text{General Support:} \quad \sigma^* &= b - C A^T \alpha^* + C \beta^* \\
    \text{Full Support:} \quad \sigma^* &= b - C A^T \alpha^*
\end{align}
Where $e$ is a vector of ones, $|\mathcal{A}|=\prod_p |\mathcal{A}_p|$, $C = I - e^T b$, and $b=\frac{1}{|\mathcal{A}|}e$ are constants. $\alpha^* \geq 0$ and $\beta^* \geq 0$ are the optimal dual variables of the solution, corresponding to the CE and distribution inequality constraints respectively. 
\end{theorem}

\begin{proof}
Start with the equation we call the primal Lagrangian form.
\begin{align} \label{eq:app_primal_lagrangian}
    L_\sigma^{\alpha, \beta, \lambda} &= \frac{1}{2} \sigma^T\sigma + \alpha (A \sigma - \epsilon) - \beta^T \sigma + \lambda (e^T \sigma - 1)
\end{align}

To construct the dual Lagrangian we first take derivatives with respect to the primal variables $\sigma$, and set them equal to zero.
\begin{align}
    \frac{\partial L_\sigma^{\alpha, \beta, \lambda}}{\partial \sigma} &= \sigma^* + \left( A^T \alpha - \beta + \lambda \right) = 0 \implies \nonumber \\
    \sigma^* &= -A^T \alpha + \beta - \lambda e  \label{eq:app_x_solve}
\end{align}

These can be substituted back into the primal Lagrangian.
\begin{align*}
    L^{\alpha, \beta, \lambda}
    &= -\frac{1}{2}\left[ A^T \alpha - \beta + \lambda e \right]^T \left[ A^T \alpha - \beta + \lambda e \right] \\
    &\qquad - \alpha^T e \epsilon - \lambda
\end{align*}

Taking derivatives with respect to $\lambda$.
\begin{align}
    \frac{\partial L^{\alpha, \beta, \lambda}}{\partial \lambda} &= - |\mathcal{A}| \lambda^* - e^T A^T \alpha_p + e^T \beta - 1 = 0
    \implies  \nonumber  \\
    \lambda^* &= \frac{1}{|\mathcal{A}|} \left ( -e^T A^T \alpha + e^T \beta - 1 \right )  \label{eq:app_lambda_solve}
\end{align}

Substituting $\lambda$ back. Remember that there are non-negative constraints on $\alpha \geq 0$ and $\beta \geq 0$. Therefore, one cannot easily solve for $\beta$ to reduce this expression further. By defining $C = I - e b^T$, and  $b^T = \frac{1}{|\mathcal{A}|} e^T$ (the uniform distribution), noting $b^TC = 0$ and $C^TC = C$, we arrive at the general support dual Lagrangian form.
\begin{align}
    L^{\alpha, \beta}
    &= -\frac{1}{2}\left[ C A^T \alpha - C \beta - \frac{e}{|\mathcal{A}|} \right]^T \left[ C A^T \alpha - C \beta - \frac{e}{|\mathcal{A}|} \right] \nonumber\\
    &\quad - \alpha^T \epsilon + b^T A^T \alpha - b^T \beta + \frac{1}{|\mathcal{A}|} \nonumber \\
    &= - \frac{1}{2} \alpha^T A C A^T \alpha + b^T A^T \alpha - \alpha^T \epsilon  - \frac{1}{2} \beta^T C \beta \nonumber \\
    &\quad - b^T \beta + \alpha^T A C \beta + \frac{1}{2} b^Tb \label{eq:app_dual_lagrangian}
\end{align}

By combining Equations \eqref{eq:app_x_solve} and \eqref{eq:app_lambda_solve}, we can arrive at an equation that describes the relationship between the primal and dual parameters.
\begin{equation} \label{eq:app_primal_from_dual}
    \sigma^* = b - CA^T\alpha^* + C \beta^*
\end{equation}

It is advantageous to try and obtain a more compact representation. We can achieve this if we assume $\sigma$ has full support. In this case, $\beta=0$, because none of the $\sigma\geq0$ constraints are active and we obtain Equation~\eqref{eq:app_dual_lagrangian_full} the full support dual Lagrangian form.
\begin{align} \label{eq:app_dual_lagrangian_full}
    L_\alpha &= - \frac{1}{2} \alpha^T A C A^T \alpha + b^T A^T \alpha - \epsilon^T \alpha + \frac{1}{2}b^Tb \\
    \sigma^* &= b - C A^T \alpha^*
\end{align}
\end{proof}

\begin{theorem}[Existence of Full-Support $\epsilon$-MG(C)CE]
For all games, there exists an $\epsilon \leq \max(Ab)$ such that a full-support, $\epsilon$-MG(C)CE exists. A uniform solution, $b$, always exists when $\max(Ab) \leq \epsilon$. When $\epsilon < \max(Ab)$, the solution is non-uniform.
\end{theorem}

\begin{proof}
Note, $A\sigma \leq \epsilon \iff AC\sigma + Ab \leq \epsilon$, $Cb=0$ and that $b$ is the uniform distribution with maximum possible Gini impurity. Note that when $\max(Ab) \leq \epsilon$ the inequality will always hold with $\sigma=b$. And the inequality cannot hold with $\sigma=b$ when $\epsilon \leq \max(Ab)$.
\end{proof}

\subsection{Family}

\begin{table*}[t!]
\centering
\captionsetup{width=.8\linewidth}
\caption{Family of MG(C)CE solutions.}

\begin{tabular}{rrl}
    MG(C)CE & $\epsilon$ & Properties   \\ \hline
    $\max(Ab)\epsilon$-MG(C)CE            & $\max(Ab)$ & Uniform, highest entropy, lowest payoff \\
    $\frac{1}{2}\max(Ab)\epsilon$-MG(C)CE   & $\frac{1}{2} \max(Ab)$ & Between uniform and (C)CE \\
    $\text{full}\epsilon$-MG(C)CE            & $\leq \max(Ab)$ & Minimum $\epsilon$ such that MG(C)CE is full-support \\
    MG(C)CE            & $0$ & Weak (C)CE, NE in two-player constant sum  \\
    $\min\epsilon$-MG(C)CE             & $\leq 0$ & Strictest (C)CE, lowest entropy, highest payoff
\end{tabular}
\label{tab:ce_family}
\end{table*}

\begin{theorem}
    For non-trivial games, the MG(C)CE lies on the boundary of the polytope and hence is a weak equilibrium.
\end{theorem}

\begin{proof}
    MG(C)CE is attempting to be near the uniform distribution. If the uniform distribution is not a (C)CE the MG(C)CE lies on the boundary of the (C)CE polytope, and by definition is weak. If the uniform distribution is a (C)CE, then it is also a NE (because it factorizes). It therefore lies on the polytope if it is a non-trivial game by \cite{nau2004_geometry_ce}.
\end{proof}

Table \ref{tab:ce_family} summarizes the family of solutions that make up MG(C)CE. Note that a similar family can be defined for ME(C)CE.

\subsection{Invariance}

\begin{theorem}[Affine Payoff Transformation Invariance] \label{theorem:invariance}
If $\sigma^*$ is the $\epsilon$-MG(C)CE of a game, $\mathcal{G}$, then for each player $p$ independently we can transform the payoff tensors $\tilde{G}_p = c_p G_p + d_p$ and approximation vector $\tilde{\epsilon}_p = a_p \epsilon_p$ for some positive $c_p$ and real $d_p$ scalars, without changing the solution. 

Furthermore, if a game, $\mathcal{G}$ has (C)CE constraint matrix, $A$, and bound vector, $\epsilon$, then each row can be scaled independently without changing the MG(C)CE.
\end{theorem}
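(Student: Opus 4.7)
The plan is to reduce both parts of the theorem to a single observation: the MG(C)CE objective $-\tfrac{1}{2}\sigma^T\sigma$ has no dependence on the payoffs, so the unique optimum is determined entirely by the feasible polytope, and the polytope itself is preserved under the claimed transformations.

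First I would compute how the advantage matrix $A_p$ transforms under $\tilde G_p = c_p G_p + d_p$. By definition $\tilde A_p(a'_p, a_p, a_{-p}) = \tilde G_p(a'_p, a_{-p}) - \tilde G_p(a_p, a_{-p})$, so the constants $d_p$ cancel and the scalar $c_p>0$ factors out, giving $\tilde A_p = c_p A_p$. The $\epsilon$-(C)CE constraint for player $p$ therefore becomes $c_p A_p \sigma \le c_p \epsilon_p$, which is equivalent (after dividing by the positive scalar $c_p$) to the original $A_p \sigma \le \epsilon_p$. Doing this for every player shows that the full constraint system (the (C)CE polytope intersected with the simplex) is unchanged. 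Since the objective in Equations~\ref{eq:quad_obj}--\ref{eq:prob_con} does not involve $G$ at all, the unique maximizer $\sigma^*$ is identical, proving the first claim.

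For the second claim I would note that each row of $A$ corresponds to a single linear inequality $A_i \sigma \le \epsilon_i$, which defines a half-space in $\sigma$-space. Multiplying both sides by any $k_i > 0$ yields the same half-space, and since the feasible region is the intersection of these half-spaces with the simplex, the polytope (and hence the MG(C)CE) is invariant under independent positive row scalings of $(A, \epsilon)$. This second statement is in fact strictly stronger than the first: Part~1 corresponds to the special case in which all rows belonging to a given player share the same scaling factor $c_p$.

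No serious obstacle is expected; the only subtlety worth emphasizing is that the $\epsilon$ vector \emph{must} be scaled in lockstep with $A$ (or with $c_p$), since scaling $A$ alone would tilt the half-spaces and genuinely change the feasible set. It is also worth flagging that the theorem statement writes $\tilde\epsilon_p = a_p \epsilon_p$ while consistency with the payoff transformation requires the scalar to equal $c_p$; I would adopt $c_p$ throughout the proof. Cardinal-utility invariance (offset by $d_p$, positive scaling by $c_p$) then follows immediately, placing MG(C)CE alongside NE and (C)CE, and in contrast to MW(C)CE which is sensitive to affine rescalings across players.
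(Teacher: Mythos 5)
Your proposal is correct, and it takes a slightly different route from the paper's. Both proofs begin identically, by computing that the offset $d_p$ cancels in the advantage differences and the positive scale factors out, $\tilde A_p = c_p A_p$. From there you argue in the \emph{primal}: dividing each transformed constraint $c_p A_p \sigma \le c_p \epsilon_p$ by $c_p > 0$ recovers the original half-space, so the feasible polytope is literally unchanged, and since the Gini objective $-\tfrac{1}{2}\sigma^T\sigma$ never references $G$, the unique maximizer of a strictly concave function over an unchanged convex set is unchanged. The paper instead argues in the \emph{dual}: it observes that $A$ only ever appears in the Lagrangian paired with the dual variables $\alpha$, so the scaling $\tilde A = cA$ can be absorbed by the substitution $\tilde\alpha = \alpha/c$ without altering the optimization, and likewise row-by-row since each component of $\alpha$ multiplies exactly one row of $A$. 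Your version is more elementary and self-contained; the paper's version buys the additional information of exactly how the optimal dual variables rescale, which it exploits elsewhere (the $\alpha$-parameterized scalable representation and the $L_2$ row-normalization used to precondition the dual gradient methods). Two of your side remarks are also on point and worth keeping: the observation that the row-scaling claim strictly subsumes the per-player claim (the latter is the case where all rows of a given player share the factor $c_p$), and the flag that the theorem's $\tilde\epsilon_p = a_p\epsilon_p$ is evidently a typo for $c_p\epsilon_p$, since the constraints must be scaled in lockstep with $A_p$ or the half-spaces genuinely change.
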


\begin{proof}
The only way that a game's payoff, $G$, influences the solution is via the (C)CE constraint matrices $A_p$. Recall that these are defined as the difference between action payoffs $a_p \neq a'_p \in \mathcal{A}_p$. It is easy to see that the constant $d_p$ will cancel immediately.
\begin{align}
    \tilde{A}_{p,i,j} &= \tilde{G}_p(a'_p, a_{-p}) - \tilde{G}_p(a_p, a_{-p}) \\
    &= c(G_p(a'_p, a_{-p}) - G_p(a_p, a_{-p})) \nonumber
\end{align}
Notice that $A$ always appears alongside the dual variables $\alpha$. Therefore any scale in $\tilde{A} \tilde{\alpha} = c A \tilde{\alpha}$ can be counteracted by $\tilde{\alpha} = \frac{\alpha}{c}$, without changing the nature of the optimization.

Similar to above, not only does $\alpha_p$ appear alongside $A_p$, each element appears alongside a particular row of $A_p$. Therefore not only can a whole $A_p$ be scaled by a positive factor, each row of $A_p$ can be scaled individually. Intuitively, each row of the (C)CE constraint matrix defines an equation of a plane in the simplex, and planes are not altered when scaled by a positive factor. We may exploit this property to better condition our optimization problem.
\end{proof}

\section{MGCE Computation}
\label{sec:mgce}

There are several tricks that can be employed to simplify the nature of the computation problem.

\subsection{Bounded Gradient Methods}
\label{sec:computation_grad}

It is easy to formulate gradient algorithms to solve for the MG(C)CE. It is most convenient to work in the reduced dual form of the problem as it enforces the probability equality constraint automatically, allows for making the full-support assumption, and does not require any projection routines. The computations involve sparse matrices, so appropriate sparse data structures should be used. The dual variables have a non-negative constraint, which is also sometimes referred to as a box or bound constraints in the literature. 

For gradient ascent, initialize $\alpha^0 = 0$, $\beta^0 = 0$, and update the variables according to their gradient, where $\text{NN}(\sigma) = \max(0, \sigma)$, ensures the variables remain non-negative.
\begin{align}
    \alpha^{t+1} &\gets \text{NN} \left [ \alpha^t + \gamma (-A C A^T \alpha^t + A b - \epsilon + AC \beta^t) \right ] \nonumber \\
    \beta^{t+1} &\gets \text{NN} \left [  \beta^t + \gamma ( - C \beta^t - b + C^T A^T \alpha^t) \right ]
\end{align}

If we assume the solution is full-support, we can simplify the dual version even further by dropping the $\beta$ variable updates.
\begin{align}
    \alpha^{t+1} &\gets \text{NN} \left [ \alpha^t + \gamma (-A C A^T \alpha^t + A b - \epsilon) \right ]
\end{align}

Second order derivatives are also easily computed, allowing use of bounded second order linesearch optimizers, such as L-BFGS-B \cite{byrd1995_lbfgsb}. Other techniques such as momentum \cite{rumelhart1986_backpropagation}, preconditioning the rows of the $A$ matrix, and iterated elimination of strictly dominated strategies of the payoff matrix will also help. An efficient conjugate gradient method can be adapted from Polyak's algorithm \cite{polyak1969_con_cg,oleary1980_cg}, which is a conjugate gradient method modified to support solving problems with bounds and is proven to converge in finite iterations.

\subsection{Payoff Reductions}

There are two methods which could be used to reduce the size of the payoff tensor and hence reduce the complexity of the game that is required to be solved; repeated action elimination, and dominated action elimination.

\begin{description}
    \item[Repeated Action Elimination:] Consider a payoff which has repeated strategies (identical payoffs). This represents a redundancy in the game formulation and we can therefore keep only one of these actions and appropriately modify the objective to account for this alteration. Let $r_p$ be the number of repeats for each action after elimination (i.e. $r_p=e$ if all were unique). Define $r=\otimes_{p}r_p$ as the flattened repeat count which is the same size as $\sigma$ and $\tilde{r}_p = \otimes_{p'}\{ e \text{ if } p' = p \text{ else } r_{p'}\}$. Then the constraints now become $r^T \sigma = 0$ and $A_p(\sigma \cdot \tilde{r}_p) \leq \epsilon_p$, and the objective becomes $1 - \sigma^T (\sigma \cdot r)$. This has the dual effect of reducing the number of variables and constraints in the problem and, more importantly, breaks the symmetry of repeated terms which several solvers can struggle with. It is important to run this procedure before eliminated dominated actions, because repeated actions by definition do not dominate one another.
    
    \item[Dominated Action Elimination:] Strictly dominated strategies can be pruned from the payoff without affecting the results because dominated strategies can never have non-zero support in CEs where $\epsilon \leq 0$. Any CE solution with non-positive $\epsilon$ can exploit this reduction.
\end{description}

The nature of JPSRO means that it is common for actions to be repeated (best responders can produce the same output over multiple distributions) and actions to be strictly dominated by others (as the algorithm finds better and better policies).

\subsection{Eigenvalue Normalization}

Some methods, such as gradient methods, benefit from the eigenvalues of the problem being similar in magnitude. We found empirically that re-normalizing by the $L_2$ norm of the rows of the constraint matrix resulted in eigenvalues close to 1. By Theorem~\ref{theorem:invariance} this is a legal procedure.

\subsection{Dual Optimal Learning Rate}

For the dual form of the objective there is an optimal constant learning rate we can use which is based on the eigenvalues of the Hessian. Calculating the eigenvalues exactly may be too computationally expensive. We can instead obtain an upper bound. A good choice of learning rate that is guaranteed to converge is $\gamma = \frac{2}{\sigma_\text{max} + \sigma_\text{min}^+} \geq \frac{2}{\max_j \sum_{i} |D_{ij}| + \min_j \sum_{i} |D_{ij}|}$, where $D$ is the Hessian of the dual form. A proof follows below.

\begin{proof}
    $C$ is idempotent and positive semi-definite. For any $B$, $BB^T$ is positive semi-definite, therefore $(AC)(AC)^T = ACA^T$ is positive semi-definite. This is the first part of the block diagonals of the Hessian, $D$, which is therefore singular symmetric positive semi-definite.
    
    It is known that the best choice of constant learning rate in this setting is $\gamma=\frac{2}{\sigma_\text{max+} + \sigma_\text{min+}}$. Because the Hessian is not full rank and positive semi-definite,  $\sigma_\text{min} = 0$. We need to find the smallest non-zero eigenvalue. One possible upper bound on the maximal eigenvalues of a positive semi-definitive matrix, by the Gerschgorin circle Theorem \cite{gerschgorin_1931}, is:
    \begin{align}
        \sigma_{max} &\leq \max_j \sum_{i} |D_{ij}| = \max_i \sum_{j} |D_{ij}| \\
        \sigma_{min}^+ &\leq \min_j \sum_{i} |D_{ij}| = \min_i \sum_{j} |D_{ij}|
    \end{align}
\end{proof}

\subsection{$\min\epsilon$-MG(C)CE}
\label{subsec:min_epsilon}

The previous formulations discussed assume that $\epsilon$ is given as a hyper-parameter. If we want to directly find the minimum $\epsilon$ that produces a valid maximum Gini impurity we must also optimize over $\epsilon$. The insight here is that the derivatives of the objective function with respect to the approximation parameter must always be stronger than the derivatives of the objective function with respect to the distribution.
\begin{equation}
    \frac{\partial L}{\partial \epsilon} \geq e^T\frac{\partial L}{\partial \sigma} = -e^T\sigma = -1
\end{equation}

Therefore an additional objective with a term of $-2\epsilon$ would be sufficient to ensure this condition holds.
\begin{align}
    L_\sigma^{\alpha, \beta, \lambda, \epsilon} &= \frac{1}{2} \sigma^T\sigma + 2\epsilon + \alpha^T (A\sigma - \epsilon) \nonumber \\
    &\quad - \beta^T \sigma + \lambda (e^T \sigma - 1) \\
    L^{\alpha, \beta, \epsilon} &= 2 \epsilon - \frac{1}{2} \alpha^T ACA^T \alpha + b^TA^T \alpha - \epsilon^T \alpha \nonumber \\
    &\quad -\frac{1}{2} \beta^T C \beta - b^T \beta + \alpha^T AC \beta + \frac{1}{2} b^Tb \\
    \sigma^* &= b - CA^T\alpha^* + C \beta^*
\end{align}

\section{Joint PSRO}
\label{sec:psro}

While the concept of JPSRO is straightforward, careful attention needs to be made around a) formulating best response operators, b) creating suitable MSs, c) defining evaluation metrics, and d) establishing convergence. We discuss these in detail in this section.

\subsection{Meta Game Estimation}

There are two strategies for estimating the meta-game (a normal form payoff tensor populated by the returns of all the policies); exact sampling and empirical sampling.

\begin{description}
    \item[Exact Sampling:] The exact return is computed for each player by traversing the entire game tree. This is only suitable for small games, or when using deterministic policies that cannot reach the majority of the game tree.
    
    \item[Empirical Sampling:] For larger games, or situations where the policy cannot be easily queried (for example when using a policy that depends on internal state like an LSTM) we may have to  estimate the return through sampling.
\end{description}

In this work we used exact sampling so we could conduct an exact study into the performance of different MSs without introducing noise form other sources. However, the authors believe this approach can be scaled with empirical sampling, as has been achieved with PSRO.

\subsection{Meta-Solvers}

Many of the traditional PSRO solvers are factorizable solutions. Equivalently, their joint probabilities can be marginalized without losing any information.

\begin{description}
    \item[Uniform:] This solver places equal probability mass over each policy it has found so far. PSRO using a uniform distribution is also known as Fictitious Self Play (FSP) \cite{heinrich2015_fsp}. A key advantage of this approach is that it is not necessary to compute the meta-game to obtain this distribution. It is proven to slowly converge in the two-player, constant-sum setting. 
    
    \item[Nash Equilibrium (NE):] The well known solution concept \cite{nash1951_neq}, when used in PSRO is called Double Oracle (DO) \cite{mcmahan2003_double_oracle}. This is difficult to compute for n-player, general-sum, and is equivalent to CCEs in two-player, zero-sum so we did not benchmark against this MS.
    
    \item[Projected Replicator Dynamics (PRD):] An evolutionary method of approximating NE, introduced in \cite{lanctot2017_psro}.
\end{description}

There are a number of solvers which produce full joint distributions. We describe some we think are relevant here. Note that all factorizable solutions mentioned previously can be trivially promoted to full distributions.

\begin{description}
    \item[$\alpha$-Rank:] A solution concept based on the stationary distribution of a Markov chain \cite{omidshafiei2019_alpharank}. $\alpha$-Rank has been studied before in the context of PSRO \cite{muller2020_alpharankpsro}, however the authors marginalized over the distribution.
    
    \item[Maximum Welfare (C)CE (MW(C)CE):] A non-unique linear formulation that maximizes the sum of payoffs over all players. In the case where there are multiple (C)CEs with maximum welfare we can define a maximum entropy version to spread weight, MEMW(C)CE, and a random version to select one at random, RMW(C)CE. We use the latter as an MS baseline in experiments.
    
    \item[Random Vertex (C)CE (RV(C)CE):] A linear formulation. In our implementation we formulate the standard linear (C)CE problem and randomly sample a linear cost function from the unit ball. Note that this selects a random vertex on the (C)CE polytope and is not sampling from within the polytope volume or elsewhere on the polytope surface.
    
    \item[Maximum Entropy (C)CE (ME(C)CE):] A unique nonlinear convex formulation that maximizes the Shannon Entropy of the resulting distribution \cite{ortix2007_mece}. We do not evaluate this solution concept in this work due to computational difficulties when scaling to large payoff tensors, however we expects its performance to be similar to MG(C)CE.
    
    \item[Maximum Gini (C)CE (MGCE):] A unique quadratic convex formulation that maximizes the Gini Impurity (a form of Tsallis Entropy), introduced in this work.
    
    \item[Random Dirichlet:] Sample a distribution randomly from a Dirichlet distribution with $\alpha=1$. This has not been used in the literature before but we believe acts as a good (naive) baseline against RVCE.
    
    \item[Random Joint:] Sample a single joint policy from the set. This has not been used in the literature before either but we believe acts as a good (naive) baseline against RV(C)CE.
    
\end{description}

In previous work joint solvers have been used \cite{muller2020_alpharankpsro}, however the authors marginalized the distributions so they could be used in classic PSRO.

\subsection{Joint Best Responders}
\label{subsec:joint_br}

We provide two best response operators for JPSRO. The first is required to converge to a CCE in policy space (when using CCE meta-solvers). The second is required to converge to a CE in policy space (when using CE meta-solvers).
\begin{description}

    \item[JPSRO(CCE)]: At each iteration there is a single BR objective for each player, which expands the player policy set, $\Pi_p^{0:t+1} = \Pi_p^{0:t} \cup \Pi^{t+1}_p$, where $\Pi^{t+1}_p = \{ \text{BR}^{t+1}_p \}$, and $\sigma(\pi_{-p}) = \sum_{\pi_p \in \Pi_p^{0:t}} \sigma(\pi_p, \pi_{-p})$.
    \begin{equation} \nonumber
        \text{BR}^{t+1}_p \in \argmax\limits_{\pi^*_p \in \Pi^*_p} \sum_{\pi_{-p} \in \Pi^{0:t}_{-p}} \sigma^t(\pi_{-p}) G^*_p(\pi^*_p, \pi_{-p})
    \end{equation}
    Therefore, the CCE BR attempts to exploit the joint distribution with the responder's own policy preferences marginalized out, resulting in a joint policy distribution over the \emph{other} players' policies. This means that a player is best responding to a weighted mixture of up to $\otimes{-p}|\Pi_p^t|$ joint opponent policies. This is an upper bound because $\sigma$ is often sparse.

    \item[JPSRO(CE):] There is a BR for each possible recommendation a player can get, $\Pi_p^{t+1} = \Pi_p^{0:t} \cup \Pi_p^{t+1}$, where $\Pi_p^{t+1} = \{(\text{BR}^{t+1}_p(\pi^i_p))_{i=1..|\Pi_p^{0:t}|}\}$.
    \begin{equation} \nonumber
        \text{BR}^{t+1}_p(\pi_p) \in \argmax\limits_{\pi^*_p \in \Pi^*_p} \sum_{\pi_{-p} \in \Pi^{0:t}_{-p}} \sigma^t(\pi_{-p}|\pi_p) G^*_p(\pi^*_p, \pi_{-p})
    \end{equation}
    Therefore the CE BR attempts to exploit each policy conditional ``slice''. In practice, we only calculate a BR for positive support policies. Computing the $\argmax$ of the BRs can be achieved through RL or exactly traversing the game tree. Similarly each BR is responding to a weighted mixture of up to $\otimes{-p}|\Pi_p^t|$ joint opponent policies.
\end{description}

Notice that if the distribution is factorizable (like NE), then the CE BR is equal for all player policies, and furthermore is equal to the CCE BR, illuminating the connection to PSRO's BR operator.

The best response is independent of the best responding player's policy. We can compute the $\argmax$ in a number of ways. Two common ways are exact best response, and reinforcement learning. 

\begin{description}
    \item[Exact Best Response:] Maintain exact tabular policies and compute a best response against the joint policies for each player, through maximizing value by traversing the game tree. We employ this approach in this work to allow us to compare meta-solvers without introducing noise from approximate BRs. This method is only suitable for small games, or when using only deterministic policies.
    
    \item[RL:] In this setting, the learning algorithms train against randomly sampled joint-policies according to $\sigma$, and do standard value maximization. Both on-policy (such as Policy Gradient) and off-policy (such as Q-Learning) are suitable learning algorithms. Function approximation may also be used. This approach has been used extensively in PSRO before.
\end{description}


\subsection{Evaluation Metrics}
\label{subsec:metrics}

For two-player, constant-sum games there is a clear evaluation metric; how close the players are to the unique Nash Equilibrium (measured by NEGap defined below). However, outside of this narrow setting it is unclear how to fairly evaluate the policies that have been found. This is true for a number of reasons including: there being multiple equilibria, and equilibria not necessarily having good payoff. A combination of high payoff and stability is indicative of a strong set of policies. In this section we describe a number of metrics that could help describe the strength of the resulting joint policies.

\begin{description}
    \item[Value:] This describes the undiscounted return for each player at the root state of a game when following a joint policy, mixed under a joint distribution.
    \begin{align*}
        V_p(\sigma) &= \sum_{\pi \in \Pi} \sigma(\pi) G_p(\pi) = \mathop{\mathbb{E}}\limits_{\pi \sim \sigma} \big [ G_p(\pi) \big ] \\
        V_p(\sigma( \cdotp | \pi_p)) &= \smashoperator{\sum_{\pi_{-p} \in \Pi_{-p}}} \sigma(\pi_{-p} | \pi_p) G_p(\pi_p, \pi_{-p}) \\
        &= \mathop{\mathbb{E}}\limits_{\substack{\pi_{-p} \sim\\\sigma( \cdotp | \pi_p)}} \big [ G_p(\pi_p, \pi_{-p}) \big ] 
    \end{align*}
    
    \item[NE Gap:] This quantity describes how close joint policies are to an NE (referred to as NashConv in \cite{lanctot2017_psro}) under $\sigma$. This is only defined for marginal distributions over policies.
    \begin{align}
        \text{NEGap}_p(\sigma) &= \sum_{\pi \in \Pi} \sigma(\pi) G_p(\text{BR}_p, \pi_{-p}) - V_p(\sigma) \nonumber \\
        &= \mathop{\mathbb{E}}\limits_{\pi \sim \sigma} \big [ G_p(\text{BR}_p, \pi_{-p}) \big ] - V_p(\sigma) \nonumber \\
        \text{NEGap}(\sigma) &= \sum_p \text{NEGap}_p(\sigma)
    \end{align}
    
    \item[CCE Gap:] This quantity describes how close joint policies are to a coarse correlated equilibrium (CCE) under $\sigma$. The origins of this metric can be deduced from studying the CCE BR operator. 
    \begin{align*}
        \text{CCEGap}_p(\sigma) &= \left \lfloor \sum_{\pi \in \Pi} \sigma(\pi) G_p(\text{BR}_p, \pi_{-p}) - V_p(\sigma) \right \rfloor_+ \\
        &= \left \lfloor \mathop{\mathbb{E}}\limits_{\pi \sim \sigma} \big [ G_p(\text{BR}_p, \pi_{-p}) \big ] - V_p(\sigma) \right \rfloor_+ \\
        \text{CCEGap}(\sigma) &= \sum_p \text{CCEGap}_p(\sigma)
    \end{align*}
    Where $\lfloor x \rfloor_+ = max(0, x)$, is the non-negative operator. Note that it is possible for a best response over all joint strategies to have lower value than playing according to the joint distribution for a given player (because a BR is blind to the best responding player's correlation with the opponent policies, and deviating from this correlation can hurt performance).
    
    \item[CE Gap:] This quantity describes how close joint policies are to a correlated equilibrium (CE) under $\sigma$.
    
    \begin{align*}
        \text{CEGap}_p(\sigma, \pi_p) &\quad \\
        &\mkern-120mu= \biggl \lfloor \sum\limits_{\substack{\pi_{-p} \in \\ \Pi_{-p}}} \sigma(\pi_{-p} | \pi_p)  G_p(\text{BR}_p(\pi_p), \pi_{-p}) - V_p(\sigma( \cdotp | \pi_p)) \biggr \rfloor_+  \\
        &\mkern-120mu= \biggl \lfloor \mathop{\mathbb{E}}\limits_{\substack{\pi_{-p} \sim \\ \sigma( \cdotp | \pi_p)}} \big [ G_p(\text{BR}_p(\pi_p), \pi_{-p}) \big ] - V_p(\sigma( \cdotp | \pi_p)) \biggr \rfloor_+  \\
        \text{CEGap}_p(\sigma) &= \sum_{\pi_p \in \Pi_p} \sigma(\pi_p) \text{CEGap}_p(\sigma, \pi_p) \nonumber \\
        \text{CEGap}(\sigma) &= \sum_p \text{CEGap}_p(\sigma)
    \end{align*}
    
    \item[Unique Policy:] Each iteration of JPSRO(CCE) produces n new policies (one for each player), and JPSRO(CE) produces up to the number of policies found so far. These are best responses to the joint mixture of existing polices, however, they are not guaranteed to be distinct from previous policies that have been found. The number of unique policies found so far could be a good indicator of how efficiently a meta-solver is producing new policies.
\end{description}

\subsection{Proof of JPSRO Convergence}
\label{subsec:jpsro_proof}

We provide two convergence proofs for JPSRO. Firstly, when using CCE meta-solvers with a CCE best response operator, which we refer to as JPSRO(CCE), and secondly when using CE meta-solvers with a CE best response operator, which we refer to as JPSRO(CE). Note that, in order to ignore possibly undefined values of $\sigma_t(\pi_{-p} | \pi_p)$, we use the formulation of correlated equilibria using joint probabilities instead of conditional ones. The definitions being equivalent, the conclusions are as well.
Note that we also assume that $\forall p, t, |\text{BR}_p^t| > 0, \forall \pi_p \text{ st. } \sigma_t(\pi_p) > 0, |\text{BR}_p^t(\pi_p)| > 0$, i.e. every time a best response should be computed, it is. We discuss a relaxation of these conditions, and why it is useful, in Section \ref{subsub:proof_relax}.

\subsubsection{Proof of JPSRO(CCE)}
\label{subsubsec:jpsro_cce_proof}

\begin{theorem}[CCE Convergence]
When using a CCE meta-solver and CCE best response in JPSRO(CCE) the mixed joint policy converges to a CCE under the meta-solver distribution.
\end{theorem}

We recall the definition of coarse correlated equilibria. For joint probability $\sigma$, joint policy set $\Pi = \otimes_{p}\Pi_p$ where $\Pi_p$ is the set of valid policies of player $p$ and $\otimes$ is the Cartesian product, and payoff function $G$, such that $G_p(\sigma)$ is the payoff of player $p$ when all player play according to $\sigma$, a Coarse Correlated Equilibrium is a joint distribution $\sigma$ over $\Pi$ such that, for any player $p$ and any policy $\pi'_p$ of player $p$, \begin{equation}\label{eq:cce_eq_joint} \sum\limits_{\pi \in \Pi} \sigma(\pi) G_p(\pi'_p, \pi_{-p}) \leq \sum\limits_{\pi \in \Pi} \sigma(\pi) G_p(\pi) \end{equation}

In other words, a CCE is a distribution from which no player has an incentive to unilaterally deviate \emph{before} being assigned their action. From this definition of CCEs, we derive the definition of \text{CCEGap}, which measures the above gap over all players
\begin{equation*}
    \text{CCEGap}(\sigma) = \sum\limits_p \biggl \lfloor \max_{\pi'_p} \sum\limits_{\pi \in \Pi} \sigma(\pi) ( G_p(\pi'_p, \pi_{-p}) - G_p(\pi)) \biggr \rfloor_+
\end{equation*}
where $\lfloor x \rfloor_+ = max(0, x)$, this $\lfloor \rfloor_+$ term being necessary because the gap is potentially negative, as one can see from Equation~\eqref{eq:cce_eq_joint}. From this definition, we introduce the following lemma:
\begin{lemma}[Game CCE and \text{CCEGap}] \label{lemma:ccegap}
    We have the following equivalence: 
    \begin{enumerate}[(i)]
        \item $\sigma$ is a CCE of the game
        \item CCEGap($\sigma$) = 0
    \end{enumerate}
\end{lemma}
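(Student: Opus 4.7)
The plan is to verify the equivalence by unpacking the definition of $\text{CCEGap}$ into its per-player non-negative components and exploiting the truncation $\lfloor \cdot \rfloor_+$. The whole argument is essentially a bookkeeping exercise directly on Equation~\ref{eq:cce_eq_joint}, recast as a scalar condition.

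For the direction (i) $\Rightarrow$ (ii), I would assume $\sigma$ is a CCE. By Equation~\ref{eq:cce_eq_joint}, for every player $p$ and every policy $\pi'_p$, the scalar $\sum_{\pi \in \Pi} \sigma(\pi)(G_p(\pi'_p, \pi_{-p}) - G_p(\pi))$ is non-positive. Taking the maximum over $\pi'_p$ preserves non-positivity, so applying $\lfloor \cdot \rfloor_+$ yields $0$ for each $p$; summing over $p$ then gives $\text{CCEGap}(\sigma) = 0$.

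For the direction (ii) $\Rightarrow$ (i), I would use the fact that $\text{CCEGap}(\sigma)$ is a sum of terms that are each non-negative (because of the $\lfloor \cdot \rfloor_+$ truncation). Hence $\text{CCEGap}(\sigma) = 0$ forces every per-player term to vanish: for each $p$, $\lfloor \max_{\pi'_p} \sum_{\pi} \sigma(\pi)(G_p(\pi'_p, \pi_{-p}) - G_p(\pi)) \rfloor_+ = 0$, which means the inner maximum is $\leq 0$, so for every $\pi'_p$ the deviation gain is non-positive. That is precisely the CCE inequality (Equation~\ref{eq:cce_eq_joint}), so $\sigma$ is a CCE.

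There is no real obstacle: the lemma is essentially a rewriting of the definition as a single scalar condition. The only subtlety worth flagging is the role of $\lfloor \cdot \rfloor_+$, which is what makes a vanishing sum force each individual per-player term to vanish in the reverse direction; without the truncation, negative slack for one player could cancel against positive deviation gain for another and the equivalence would break.
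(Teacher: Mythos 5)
Your proposal is correct and follows essentially the same route as the paper: both directions unpack the definition of $\text{CCEGap}$, use that the $\lfloor \cdot \rfloor_+$ truncation makes each per-player term non-negative so a vanishing sum forces each term to vanish, and recover the CCE inequality from the non-positivity of the inner maximum. Your remark about why the truncation is essential for the reverse direction is a correct and worthwhile observation, though the paper leaves it implicit.
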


\begin{proof}
Let us first prove (i) $\rightarrow$ (ii). Suppose $\sigma$ is a CCE. Then for any player $p$ and any policy $\pi'_p$ of player $p$,
$$ \sum_{\pi \in \Pi} \sigma(\pi) G_p(\pi'_p, \pi_{-p}) \leq \sum_{\pi \in \Pi} \sigma(\pi) G_p(\pi)$$
therefore, by subtracting the right hand-term and taking the maximum over $\pi'_p \in \Pi_p$,
$$ \max_{\pi'_p} \sum\limits_{\pi \in \Pi} \sigma(\pi) (G_p(\pi'_p, \pi_{-p}) - G_p(\pi)) \leq 0$$
and so
$$
\biggl \lfloor \max_{\pi'_p} \sum\limits_{\pi \in \Pi} \sigma(\pi) (G_p(\pi'_p, \pi_{-p} - G_p(\pi)) \biggr \rfloor_+ = 0
$$
Summing this last inequality over all players yields (ii).

Let us now prove (ii) $\rightarrow$ (i). Suppose that $\sigma$ is such that \text{CCEGap}($\sigma$) = 0. Then, for all $p$, 
\begin{equation}
    \label{eq:cce_diff_lower_zero} 
    \max_{\pi'_p} \sum_{\pi \in \Pi} \sigma(\pi) (G_p(\pi'_p, \pi_{-p}) - G_p(\pi)) \leq 0
\end{equation}For all $\pi''_p \in \Pi_p$ we have $$ \sum\limits_{\pi \in \Pi} \sigma(\pi) G_p(\pi''_p, \pi_{-p}) \leq \max_{\pi'_p} \sum\limits_{\pi \in \Pi} \sigma(\pi) G_p(\pi'_p, \pi_{-p}) $$and therefore, by subtracting $\sum\limits_{\pi \in \Pi} \sigma(\pi) G_p(\pi)$ and using Equation \eqref{eq:cce_diff_lower_zero},$$ \sum\limits_{\pi \in \Pi} \sigma(\pi) (G_p(\pi''_p, \pi_{-p}) - G_p(\pi)) \leq 0 $$Rearranging the terms yields the proof.
\end{proof}

The context of JPSRO motivates us to expand and overload the definition \text{CCEGap}. Let us denote by $\Pi^*$ the policies of the extensive form game, and by $\Pi^{0:t}$ all the policies found by JPSRO by iteration $t$. We immediately have, for all $t$, $\Pi^{0:t} \subset \Pi^*$. We expand \text{CCEGap} via, for all $t$, \begin{align*}
    &\text{CCEGap}(\sigma, \Pi^*, \Pi^{0:t}) = \\
    &\quad \sum_p \biggl \lfloor \max_{\pi^*_p \in \Pi_p^*} \sum\limits_{\pi \in \Pi^{0:t}} \sigma(\pi) (G_p(\pi^*_p, \pi_{-p}) - G_p(\pi)) \biggr \rfloor_+
\end{align*}
The only difference is the search space of $\pi^*_{p}$, which now lives within $\Pi^*$, while the policies used in the sum live in $\Pi^{0:t}$. It is nevertheless easy to see that this new definition characterizes CCEs of $\Pi^*$ (and not of $\Pi^{0:t}$), albeit a restricted class, since $\Pi^{0:t} \subset \Pi^*$ and one can expand $\sigma$ to be zero over $\Pi^* \setminus \Pi^{0:t}$. Let us now prove Theorem~\ref{theorem:game_cce_convergence}.

\begin{proof}
To prove that JPSRO with a CCE meta-solver, JPSRO(CCE), converges to a CCE, we need only prove one thing: that JPSRO(CCE) is unable to produce new policies if and only if it has reached a CCE of the extensive form game. Provided this is true, and since all games have a finite number of deterministic policies, we have that JPSRO(CCE) necessarily cannot produce new policies forever, and therefore eventually can only produce already-discovered policies. 

Note that the joint distribution $\sigma_t$ of JPSRO(CCE) is by construction a CCE over $\Pi^{0:t}$ for all $t$ (when using a CCE meta-solver). It is nevertheless not necessarily a CCE of $\Pi^*$.

Let us now suppose that JPSRO(CCE) has not produced any new policy for any player at iteration $t$. Given the JPSRO(CCE) formulation, we can therefore restrict the search space of policies from $\Pi^*$ to $\Pi^{0:t}$ in the \text{CCEGap} max term, since the max of the expression is reached in $\Pi^{0:t}$, and we thus rewrite the \text{CCEGap} definition:
\begin{align*}
~& \sum_p \biggl \lfloor \max_{\pi'_p \in \Pi^*_p} \sum_{\pi \in \Pi^{0:t}} \sigma_t(\pi) ( G_p(\pi'_p, \pi_{-p}) - G_p(\pi)) \biggr \rfloor_+ \\
=& \sum_p \biggl \lfloor \max_{\pi'_p \in \Pi^{0:t}_p} \sum_{\pi \in \Pi^{0:t}} \sigma_t(\pi) ( G_p(\pi'_p, \pi_{-p}) - G_p(\pi)) \biggr \rfloor_+
\end{align*}
But since $\sigma_t$ is a CCE over $\Pi^{0:t}$, the second term is null. Therefore, $\text{CCEGap}(\sigma, \Pi^*, \Pi^{0:t}) = 0$, and according to Lemma \ref{lemma:ccegap}, $\sigma_t$ is therefore a CCE over $\Pi^*$, which concludes the proof.
\end{proof}


\subsubsection{Proof of JPSRO(CE)}
\label{subsubsec:jpsro_ce_proof}

\begin{theorem}[CE Convergence]
When using a CE meta-solver and CE best response in JPSRO(CE) the mixed joint policy converges to a CE under the meta-solver distribution. 
\end{theorem}

We recall the definition of correlated equilibria. Keeping the same notations as above, a correlated equilibrium is a joint distribution $\sigma$ over $\Pi$ such that, for any player $p$ and any policies $\pi_p$, $\pi'_p$ of player $p$, \begin{align*}
    &\sum\limits_{\pi_{-p} \in \Pi_{-p}} \sigma(\pi_p, \pi_{-p}) G_p(\pi'_p, \pi_{-p}) \leq \\
    &\sum\limits_{\pi_{-p} \in \Pi_{-p}}  \sigma(\pi_p, \pi_{-p}) G_p(\pi_p, \pi_{-p})
\end{align*}

In other words, a CE is a distribution from which no player has an incentive to unilaterally deviate even \emph{after} having been assigned their action. They are therefore stronger than CCEs, and the result CEs $\subseteq$ CCEs easily follows from the above inequality. From this definition of CEs, we derive the definition of \text{CEGap}, which measures the above gap over all players.
\begin{align*}
    &\text{CEGap}(\sigma) = \sum_{p, \pi_p \in \Pi_p} 
    \biggl \lfloor \max_{\pi'_p} \sum\limits_{\pi_{-p} \in \Pi_{-p}} \\ &\quad \sigma(\pi_p, \pi_{-p}) (G_p(\pi'_p, \pi_{-p}) - G_p(\pi_p, \pi_{-p})) \biggr \rfloor_+
\end{align*}

From this definition, we conclude the following lemma:
\begin{lemma}[Game CE and \text{CEGap}]\label{lemma:cegap}
    We have the following equivalence: 
    \begin{enumerate}[(i)]
        \item $\sigma$ is a CE of the game
        \item \text{CEGap}($\sigma$) = 0
    \end{enumerate}
\end{lemma}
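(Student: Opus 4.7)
The plan is to mirror almost exactly the structure used to prove Lemma~\ref{lemma:ccegap} for CCEGap, since the only substantive difference between CE and CCE is that the CE inequality is indexed by an additional action $\pi_p$ and involves the joint probability $\sigma(\pi_p,\pi_{-p})$ rather than the full $\sigma(\pi)$. The CEGap summand is already designed to measure the slack of exactly the CE inequality for each pair $(p,\pi_p)$, and the outer $\lfloor\cdot\rfloor_+$ enforces non-negativity of each term, which is the key structural feature I will exploit in both directions.

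For the direction (i)~$\Rightarrow$~(ii), I would start from the defining CE inequality
$$\sum_{\pi_{-p}} \sigma(\pi_p,\pi_{-p}) G_p(\pi'_p,\pi_{-p}) \leq \sum_{\pi_{-p}} \sigma(\pi_p,\pi_{-p}) G_p(\pi_p,\pi_{-p}),$$
subtract the right-hand side, take the maximum over $\pi'_p \in \Pi_p$, observe that the maximum is therefore $\leq 0$, and conclude that $\lfloor \max_{\pi'_p} (\cdots) \rfloor_+ = 0$ for every $p$ and every $\pi_p$. Summing over $p$ and $\pi_p \in \Pi_p$ then gives $\text{CEGap}(\sigma)=0$.

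For (ii)~$\Rightarrow$~(i), the crucial observation is that each summand in the definition of $\text{CEGap}$ is non-negative by construction. Hence if the total is zero, every individual summand is zero, i.e.\ for all $p$ and all $\pi_p \in \Pi_p$,
$$\max_{\pi'_p \in \Pi_p} \sum_{\pi_{-p}} \sigma(\pi_p,\pi_{-p}) \bigl(G_p(\pi'_p,\pi_{-p}) - G_p(\pi_p,\pi_{-p})\bigr) \leq 0.$$
Dropping the max (any particular $\pi''_p$ is bounded above by the max) and rearranging recovers the CE inequality for arbitrary $\pi''_p$, which is exactly condition (i).

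The main obstacle, if any, is a subtlety in bookkeeping rather than substance: one must be careful about policies $\pi_p$ with $\sigma(\pi_p) = 0$. The paper's chosen joint-probability formulation of CE (rather than the conditional one) makes this harmless, since the summand in both the CE inequality and in CEGap is then identically zero and the constraint is vacuously satisfied, avoiding any issue with undefined conditional distributions. Apart from that, the argument is routine symbol-pushing parallel to the CCEGap proof.
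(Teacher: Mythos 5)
Your proposal is correct and follows essentially the same route as the paper's own proof: both directions proceed by subtracting, maximizing over $\pi'_p$, and using the non-negativity of each $\lfloor\cdot\rfloor_+$ summand to force every term to zero in the converse direction. Your remark about the joint-probability formulation sidestepping undefined conditionals for $\sigma(\pi_p)=0$ matches the paper's own stated reason for adopting that formulation.
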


\begin{proof}
Let us first prove (i) $\rightarrow$ (ii). Let $\sigma$ be a CE of the game. Therefore, for all $p$, for all $\pi_p, \pi'_p \in \Pi_p$,
\begin{align*} 
    &\sum\limits_{\pi_{-p} \in \Pi_{-p}} \sigma(\pi_p, \pi_{-p}) G_p(\pi'_p, \pi_{-p}) \leq \\
    &\sum\limits_{\pi_{-p} \in \Pi_{-p}} \sigma(\pi_p, \pi_{-p})  G_p(\pi_p, \pi_{-p})
\end{align*}therefore
\begin{align*} 
    \sum\limits_{\substack{\pi_{-p} \in\\\Pi_{-p}}} \sigma(\pi_p, \pi_{-p}) (G_p(\pi'_p, \pi_{-p}) - G_p(\pi_{p}, \pi_{-p})) \leq 0
\end{align*}
which is true for all $\pi'_p \in \Pi_p$, so also true for the max over them
\begin{align*}
    \max_{\pi'_p \in \Pi_p} \sum\limits_{\substack{\pi_{-p} \in\\\Pi_{-p}}} \sigma(\pi_p, \pi_{-p}) (G_p(\pi'_p, \pi_{-p}) - G_p(\pi_p, \pi_{-p})) &\leq 0 \\
    \biggl \lfloor \max_{\pi'_p \in \Pi_{-p}} \sum\limits_{\substack{\pi_{-p} \in\\ \Pi_{-p}}} \sigma(\pi_p, \pi_{-p})  (G_p(\pi'_p,\pi_{-p}) - G_p(\pi_p, \pi_{-p})) \biggr \rfloor_+ &= 0
\end{align*}
Therefore (i) $\rightarrow$ (ii).

Let us now suppose that $\sigma$ is such that $\text{CEGap}(\sigma) = 0$. Thus
\begin{align*}
    &\sum_{p, \pi_p \in \Pi^{0:t}_p{^+}} \biggl \lfloor \max_{\pi'_p} \sum\limits_{\substack{\pi_{-p} \in\\\Pi_{-p}}}
    \sigma(\pi_p, \pi_{-p})\\
    &\qquad (G_p(\pi'_p, \pi_{-p}) - G_p(\pi_p, \pi_{-p}))  \biggr \rfloor_+ = 0
\end{align*}
Given the presence of the positivity operator $\lfloor . \rfloor_+$, we deduce that for all $p$, for all $\pi_p, \pi'_p \in \Pi^{0:t}_p$,
\begin{equation*}
    \sum\limits_{\substack{\pi_{-p} \in\\\Pi_{-p}}} \sigma(\pi_p, \pi_{-p}) (G_p(\pi'_p, \pi_{-p}) - G_p(\pi_p, \pi_{-p})) \leq 0
\end{equation*}
We therefore deduce
\begin{align*}
    &\sum\limits_{\pi_{-p} \in \Pi_{-p}} \sigma(\pi_p, \pi_{-p}) G_p(\pi'_p, \pi_{-p}) \leq \\
    &\sum\limits_{\pi_{-p} \in \Pi_{-p}} \sigma(\pi_p, \pi_{-p}) G_p(\pi_p, \pi_{-p})
\end{align*}
which concludes the proof.
\end{proof}

Once again, the \text{CEGap} definition is extended
\begin{align*}
    \text{CEGap}(\sigma, \Pi^*, \Pi^{0:t}) = \\
    \sum_{p, \pi_p \in \Pi^{0:t}_p} \biggl \lfloor \max_{\pi^*_p \in \Pi_p^*} \sum\limits_{\pi_{-p} \in \Pi_{-p}^t} \sigma(\pi_p,& \pi_{-p}) (G_p(\pi^*_p, \pi_{-p}) - \\ &G_p(\pi_p, \pi_{-p})) \biggr \rfloor_+
\end{align*}
It is once again easy to see that $\text{CEGap}(\sigma, \Pi^*, \Pi^{0:t})$ characterizes CEs of $\Pi^*$.

This lemma proven, we prove Theorem \ref{theorem:game_ce_convergence}.

\begin{proof}
Once again, it is sufficient to prove that JPSRO(CE) stops producing new policies if and only if it has reached a CE of the extensive form game, the rest of the argument being supplied by the finiteness of the game forcing JPSRO(CE) to eventually stop producing new policies.

Let us now suppose that JPSRO(CE) has not produced any new policy for any new player at iteration $t$. This means that for all $\pi_p \in \Pi_p^t$, \begin{align*}
    &\max_{\pi^*_p \in \Pi_p^*} \sum\limits_{\substack{\pi_{-p} \in\\\Pi^{0:t}_{-p}}} \sigma(\pi_p, \pi_{-p}) G_p(\pi^*_p, \pi_{-p}) = \\
    &\max_{\pi'_p \in \Pi_p^t} \sum\limits_{\substack{\pi_{-p} \in\\\Pi^{0:t}_{-p}}} \sigma(\pi_p, \pi_{-p}) G_p(\pi'_p, \pi_{-p})
\end{align*}
We subtract $\sum_{\pi_{-p} \in \Pi_{-p}^t} \sigma(\pi_p, \pi_{-p}) G_p(\pi_p, \pi_{-p})$ to both expressions, apply $\lfloor . \rfloor_+$ and sum over $\pi_p \in \Pi_p^t$ and $p$, and finally apply the fact that $\sigma$ is a CE of the restricted game to obtain that
\begin{align*}
    &\text{CEGap}(\sigma, \Pi^*, \Pi^{0:t}) = \sum_{p, \pi_p \in \Pi_p} \biggl \lfloor \max_{\pi'_p \in \Pi_p^t} \sum\limits_{\pi_{-p} \in \Pi_{-p}} \\ &\sigma(\pi_p, \pi_{-p}) (G_p(\pi'_p, \pi_{-p}) - G_p(\pi_{p}, \pi_{-p})) \biggr \rfloor_+ = 0
\end{align*}
which, by extension, is also true for the \text{CEGap} over the extensive form game. By Lemma \ref{lemma:cegap}, $\sigma$ is therefore a CE of the extensive form game, which concludes the proof.
\end{proof}

\subsubsection{Relaxation on Proof Requirements \label{subsub:proof_relax}} 

Our definition of Best Responses (BRs) is that they are functions that return a set of policies which maximize their value against a given objective. There are two reasons to add a set of policies. Firstly, the max of a given objective can be reached at different points, thus returning a set of policies enables us to potentially include them all. Secondly, using sets also enables us to potentially set some of the BR outputs to $\emptyset$. Concretely, this means that no policy is computed by the BR in that case, which saves compute time and memory. The proofs shown so far rely on each BR having cardinality greater than or equal to 1, which means that one should compute at least one new policy every time the BR operator is called. We can relax this condition into the following conditions, which we prove are sufficient (but not necessary) for convergence.

\begin{description}
    \item[CCE-Condition:]
    \begin{equation*}
        \forall T > 0, p, \exists t > T, |\text{BR}^t_p| \geq 1\end{equation*}
    i.e. each player receives an infinity of best responses. 
    
    \item[CE-Condition:]
    \begin{align*}
        \forall T > 0, p, \pi_p, \exists t > T, \text{ either } & \forall t' \geq t, \sigma_{t'}(\pi_p) = 0 \ \\ \text{ or } & |\text{BR}^t_p(\pi_p)| \geq 1
    \end{align*}
    i.e. any policy of any player is either never selected by the CE meta-solver after some time, or is considered for a best response an infinite number of times.
    
    \item[Solver-Condition:] $\forall t, \forall t' \geq t$, if $\forall p, \forall \pi_p \in \Pi^{0:t'}_p, \pi_p \in \Pi^{0:t}_p$, then $\forall \pi \in \Pi^{0:t}$ (or $\pi \in \Pi^{t'}$), $\sigma_t(\pi) = \sigma_{t'}(\pi)$: if no new policy has been added to the pool between $t$ and $t'$, the amount of mass granted to each policy by the solver does not change, i.e. repeating policies does not affect solver outputs, and the solver's outputs are constant given the same pools.
\end{description}

The rest of this section presents the relaxed theorems, their proofs, and discusses why such a relaxation is of interest.

\paragraph{Relaxed Theorems and Proofs} \hfill \\
\begin{theorem}[Relaxed CCE-Convergence]
When using a CCE meta-solver and CCE best response in JPSRO(CCE), under CCE-Condition and Solver-Condition, the mixed joint policy converges to a CCE under the meta-solver distribution.
\end{theorem}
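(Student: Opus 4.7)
The plan is to reuse the structure of the original proof of Theorem~\ref{theorem:game_cce_convergence}, replacing the ``at least one new policy per iteration'' guarantee with the weaker combination of CCE-Condition and Solver-Condition. The core observation is unchanged: the extensive form game has finitely many deterministic policies, so $|\Pi^{0:t}|$ is bounded, and hence there exists some finite time $T$ after which $\Pi^{0:t} = \Pi^{0:T}$ for all $t \geq T$. The whole argument then boils down to showing that the limit distribution $\sigma_T$ must be a CCE of $\Pi^*$, not merely of $\Pi^{0:T}$.

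First, I would invoke Solver-Condition to pin down the meta-solver's output after time $T$: since no new policy is ever added after $T$, the condition forces $\sigma_{t} = \sigma_{T}$ for all $t \geq T$. In particular, the meta-solver's distribution is eventually constant, and by construction it is a CCE over $\Pi^{0:T}$ (because the CCE meta-solver always returns a CCE on the current pool). This already gives us a well-defined candidate equilibrium $\sigma_T$.

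Next, I would use CCE-Condition to promote this restricted CCE to a full CCE of $\Pi^*$. For each player $p$, CCE-Condition yields an increasing sequence of iterations $t_1 < t_2 < \dots$ with $t_k > T$ on which $|\text{BR}^{t_k}_p| \geq 1$. At each such iteration the BR operator computes
\[
    \pi^{t_k}_p \in \argmax_{\pi^*_p \in \Pi^*_p} \sum_{\pi_{-p} \in \Pi^{0:t_k-1}_{-p}} \sigma_{t_k-1}(\pi_{-p}) G^*_p(\pi^*_p, \pi_{-p}),
\]
and since $\Pi^{0:t_k-1} = \Pi^{0:T}$ and $\sigma_{t_k - 1} = \sigma_T$, the maximization is against the fixed distribution $\sigma_T$. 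The pool-stability assumption forces $\pi^{t_k}_p \in \Pi^{0:T}_p$, i.e.\ the max over $\Pi^*_p$ is attained inside $\Pi^{0:T}_p$. Therefore the $p$-th term of $\text{CCEGap}(\sigma_T, \Pi^*, \Pi^{0:T})$ equals the $p$-th term of $\text{CCEGap}(\sigma_T, \Pi^{0:T}, \Pi^{0:T})$, which vanishes because $\sigma_T$ is a CCE of the restricted pool. Summing over $p$ and applying Lemma~\ref{lemma:ccegap} (extended to $\Pi^*$ as in the original proof) yields that $\sigma_T$ is a CCE of the extensive form game.

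The main obstacle I foresee is a subtle bookkeeping point: the restrictive proof in Section~\ref{subsubsec:jpsro_cce_proof} argued locally from a single ``no new policies found'' iteration, whereas here we must argue globally about a tail of iterations where BR is invoked asynchronously for different players. Solver-Condition is exactly what rules out pathological behaviours such as the meta-solver shifting mass between equivalent equilibria and re-triggering new BRs, so the care needed is to state clearly that \emph{both} pool stabilization \emph{and} distribution stabilization hold simultaneously before citing CCE-Condition. Once that simultaneous stabilization is established, the rest is just the argument of Theorem~\ref{theorem:game_cce_convergence} applied at time $T$.
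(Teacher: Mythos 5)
Your proof is correct and follows essentially the same route as the paper's: Solver-Condition freezes the meta-distribution on a stabilized pool, CCE-Condition guarantees a best response is eventually invoked for every player, and a best response that necessarily lands back inside the pool certifies that each player's full-game gap term vanishes because $\sigma$ is already a CCE of the restricted game. The only difference is organizational --- you argue forward from the finiteness-guaranteed stabilization time $T$, whereas the paper argues the contrapositive (a positive gap forces a genuinely new policy to eventually be added) --- but the logical content is identical.
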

\begin{proof}
Let us suppose CCE-Condition and Solver-Condition. We have that JPSRO(CCE) will necessarily be able to produce new policies until it reaches a CCE. Let us prove this: while $\text{CCEGap}(\sigma_t, \Pi^*, \Pi^{0:t}) > 0$, JPSRO(CCE) is able to add at least one new policy to its pool. Indeed, let $t > 0$ be such that $\text{CCEGap}(\sigma_t, \Pi^*, \Pi^{0:t}) > 0$. Then there exists at least one $p$ such that
\begin{equation*}
    \max_{\pi'_p \in \Pi^*_p} \sum_{\pi \in \Pi^{0:t}} \sigma_t(\pi) ( G_p(\pi'_p, \pi_{-p}) - G_p(\pi)) > 0.
\end{equation*}
Let us select one of these $p$ with minimal $t' \geq t, |\text{BR}_p^t| \geq 1$, i.e. the first best response with positive \text{CCEGap} to be added to the pool after and including $t$. $t'$ exists because we suppose CCE-Condition. Let us suppose that no new policies have been added to the pool between $t$ and $t'$. Then, since no new best response has been added to the pool between $t$ and $t'$, $\sigma_t = \sigma_{t'}$ since we suppose Solver-Condition, and therefore $\forall \pi' \in \text{BR}_p^{t'}$,
\begin{equation*}
    \sum_{\pi \in \Pi^{0:t}} \sigma_t(\pi) ( G_p(\pi'_p, \pi_{-p}) - G_p(\pi)) > 0.
\end{equation*}We have that necessarily, $\text{BR}_p^{t'} \cap \Pi^{0:t}_p = \emptyset$, as otherwise $\sigma_t$ would not be a CCE of $\Pi^{0:t}$: indeed, since $\sigma_t$ is a CCE of $\Pi^{0:t}$, $\text{CCEGap}(\sigma_t, \Pi^*, \Pi^{0:t}) = 0$, and thus $\forall p, \pi'_p \in \Pi^{0:t}_p$,
\begin{equation*}
    \sum_{\pi \in \Pi^{0:t}} \sigma_t(\pi) ( G_p(\pi'_p, \pi_{-p}) - G_p(\pi)) \leq 0,
\end{equation*}
thus new best responses can be added to the pool. We therefore have that $\text{CCEGap}(\sigma_t, \Pi^*, \Pi^{0:t}) > 0$ implies that at least one new policy can be found by JPSRO.

Thus a new best response can always be added, and will always be added since we have CCE-Condition, to the pool while $\sigma_t$ is not a CCE of the extensive form game. Therefore, if JPSRO(CCE) is unable to add any new policy to the pool (which has to be verified over all players, or measured through $\text{CCEGap}$), then it must be at a CCE, which concludes the proof.
\end{proof}

\begin{theorem}[Relaxed CE-Convergence]
When using a CE meta-solver and CE best response in JPSRO(CE), under CE-Condition and Solver-Condition, the mixed joint policy converges to a CE under the meta-solver distribution. 
\end{theorem}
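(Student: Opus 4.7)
The plan is to adapt the argument used in the Relaxed CCE-Convergence proof to the conditional per-slice structure of CE best responses. The claim to establish is: whenever $\text{CEGap}(\sigma_t, \Pi^*, \Pi^{0:t}) > 0$, JPSRO(CE) must eventually add a new policy to the pool. Combined with the fact that the extensive form game has only finitely many deterministic policies, this forces $\text{CEGap} = 0$ after finitely many iterations, at which point Lemma~\ref{lemma:cegap} (extended to $\Pi^*$) delivers convergence to a CE.

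First I would unpack $\text{CEGap}(\sigma_t, \Pi^*, \Pi^{0:t}) > 0$. Since $\text{CEGap}$ is a sum of non-negative terms indexed by $(p, \pi_p)$, there must exist a player $p$ and a policy $\pi_p \in \Pi^{0:t}_p$ with $\sigma_t(\pi_p) > 0$ such that
\[
\max_{\pi^{\star}_p \in \Pi^*_p} \sum_{\pi_{-p} \in \Pi^{0:t}_{-p}} \sigma_t(\pi_p, \pi_{-p}) \bigl( G_p(\pi^{\star}_p, \pi_{-p}) - G_p(\pi_p, \pi_{-p}) \bigr) > 0.
\]
Fix such a triple $(p, \pi_p, \pi^{\star}_p)$; this is the slice JPSRO(CE) has to eventually exploit.

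Next, I would argue by contradiction: suppose no new policy is added to the pool at any $t' \geq t$. By Solver-Condition the meta-solver output is constant on an unchanged pool, so $\sigma_{t'}$ agrees with $\sigma_t$ on $\Pi^{0:t}$ for every $t' \geq t$; in particular $\sigma_{t'}(\pi_p) = \sigma_t(\pi_p) > 0$. This rules out the ``eventually zero mass'' branch of CE-Condition for the pair $(p, \pi_p)$, so CE-Condition forces arbitrarily large $t' > t$ with $|\text{BR}^{t'}_p(\pi_p)| \geq 1$. At any such $t'$ the conditional slice seen by the best response is identical to the one at $t$, so any $\pi^{\star\star} \in \text{BR}^{t'}_p(\pi_p)$ attains the same strictly positive advantage as $\pi^{\star}_p$. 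On the other hand, because $\sigma_t$ is a CE of the restricted meta-game $\Pi^{0:t}$, every policy already in $\Pi^{0:t}_p$ has non-positive conditional advantage against this slice. Hence $\pi^{\star\star} \notin \Pi^{0:t}_p$, so $\pi^{\star\star}$ is genuinely novel and must be added to the pool, contradicting the assumption.

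This contradiction closes the induction and, together with finiteness of the deterministic policy space, yields the theorem via Lemma~\ref{lemma:cegap}. The main obstacle I expect is cleanly disentangling the two branches of CE-Condition: one has to invoke Solver-Condition specifically to eliminate the ``$\sigma_{t'}(\pi_p) = 0$ eventually'' branch, and one must be careful that the novel best response lies outside $\Pi^{0:t}_p$ (not merely outside $\Pi^{0:t}$), which depends on the per-player and per-recommendation structure of the CE best response operator rather than the marginal structure used in the CCE case.
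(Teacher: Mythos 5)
Your proposal is correct and follows essentially the same route as the paper's proof: locate a slice $(p,\pi_p)$ with positive mass and positive conditional advantage, use Solver-Condition to freeze the distribution (and hence rule out the vanishing-mass branch of CE-Condition), invoke CE-Condition to obtain an eventual best response on that slice, and conclude it must be novel because $\sigma_t$ is a CE of the restricted meta-game. Your explicit contradiction framing and the care about excluding the BR from $\Pi^{0:t}_p$ rather than $\Pi^{0:t}$ match the paper's argument; no gaps.
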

\begin{proof}
Let us suppose CE-Condition and Solver-Condition. We have that JPSRO(CE) will necessarily be able to produce new policies until it reaches a CE. Let us prove this: while $\text{CEGap}(\sigma_t, \Pi^*, \Pi^{0:t}) > 0$, JPSRO(CE) is able to add at least one new policy to its pool. Indeed, let $t > 0$ be such that $\text{CEGap}(\sigma_t, \Pi^*, \Pi^{0:t}) > 0$. Then there exists at least one $p, \pi_p \text{ st. } \sigma_t(\pi_p) > 0$ such that
\begin{equation*}
    \max_{\pi'_p \in \Pi^*_p} \sum_{\pi_{-p} \in \Pi_{-p}^t} \sigma_t(\pi_p, \pi_{-p}) ( G_p(\pi'_p, \pi_{-p}) - G_p(\pi_p, \pi_{-p})) > 0.
\end{equation*}

By CE-Condition, we have that either new policies have been added to the pool before any such $p, \pi_p$ has been selected, or that there exists $t'$ such that $t' \geq t, |\text{BR}_p^t(\pi_p)| \geq 1$. Indeed, if no new best response has been added to the pool by $t' \geq t$, the Solver-Condition implies that for all these $p, \pi_p \text{ st. } \sigma_t(\pi_p) > 0$, we also have $\sigma_{t'}(\pi_p) > 0$, hence there exists $t'$, $|\text{BR}^t_p(\pi_p)| > 1$. Let us select the minimal $t'$ over all $p, \pi_p$ such that $\text{CEGap}_p(\sigma_t, \Pi^*, \Pi^{0:t})(\pi_p) > 0$.

Let us suppose that no new policies have been added to the pool between $t$ and $t'$. Then, since no new best response has been added to the pool between $t$ and $t'$, $\sigma_t = \sigma_{t'}$ since we suppose Solver-Condition, and therefore $\forall \pi' \in \text{BR}_p^{t'}(\pi_p), \sum_{\pi_{-p} \in \Pi_{-p}^t} \sigma_t(\pi_p, \pi_{-p}) ( G_p(\pi'_p, \pi_{-p}) - G_p(\pi_p, \pi_{-p})) > 0$. We have that necessarily, $\text{BR}_p^{t'}(\pi_p) \cap \Pi^{0:t}_p = \emptyset$, as otherwise $\sigma_t$ would not be a CE of $\Pi^{0:t}$: indeed, since $\sigma_t$ is a CE of $\Pi^{0:t}$, $\text{CEGap}(\sigma_t, \Pi^*, \Pi^{0:t}) = 0$, and thus $\forall p, \pi_p \in \Pi^{0:t}_p, \pi'_p \in \Pi^{0:t}_p$,
\begin{equation*}
    \sum_{\pi_{-p} \in \Pi^{0:t}_{-p}} \sigma_t(\pi_p, \pi_{-p}) ( G_p(\pi'_p, \pi_{-p}) - G_p(\pi_p, \pi_{-p})) \leq 0.
\end{equation*}
Thus new best responses can be added to the pool. We therefore have that $\text{CEGap}(\sigma_t, \Pi^*, \Pi^{0:t}) > 0$ implies that at least one new policy can be found by JPSRO.

Thus a new best response can always be added, and will always be added since we have CE-Condition, to the pool while $\sigma_t$ is not a CE of the extensive form game. Therefore, if JPSRO(CE) is unable to add any new policy to the pool (Which has to be verified over all players, or measured through $\text{CEGap}$), then it must be at a CE, which concludes the proof.
\end{proof}

\paragraph{Discussion on Relaxation} \hfill \\
These relaxed conditions matter especially for JPSRO(CE), which has potentially exponential complexity in term of number of policies to keep (if the solver spreads mass on all policies at each iteration, then the number of policies in each players' pools at iteration $t$ is $\geq 1 + \sum_{k=1}^t 2^k = 2^{t+1}-1$).

Given that the policies produced for one player at the same iteration are potentially similar (even identical), a number of modifications could be imagined to keep JPSRO(CE) tractable. For example: a) randomly select only one $\pi_p$ from which to best respond for each player, b) only compute a best response for one randomly chosen $\pi_p$, or c) compute all BRs, but only add the BR with the largest gap to the pool.

It could make sense to randomly select only one $\pi_p$ from which to best respond for each player, at each iteration, or even to only compute a best response for one randomly chosen $\pi_p$ for one randomly-chosen $p$ at each iteration.

Note that it is necessary to impose a condition on the solver (although an alternate Solver-Condition could be formulated). To illustrate this, let us imagine modes between the best response chooser and the solver. Namely, let us imagine a two-player game, for which on even $t$, in JPSRO(CCE), the best response operator only computes one best response for player 1 (and on odd $t$, the best response is computed only for player 2). Let us also infer that the current restricted game has two CCEs. The first of these (CCE1) is not ``expandable'' for player 1, but is for player 2 (i.e. the best response for player 1 is already in the pool, but player 2's best response is not). The second (CCE2) is expandable for player 1, but not for player 2. If the CCE solver outputs CCE1 on even $t$, and CCE2 on odd $t$, then the algorithm never produces new policies, and therefore never converges.

Of course, the conditions provided are sufficient, but not necessary, and in the case where best response and meta-solver outputs' randomizations are decorrelated, it makes intuitive sense that the algorithm should also converge with probability 1, which one can prove with a more involved argument.

\section{Games}
\label{supp_sec:environments}

We study several games with JPSRO; Kuhn Poker, Trade Comm, and Sheriff. These cover three-player, general-sum, and common-payoff games. Implementations of all the games are available in OpenSpiel \cite{lanctot2019_openspiel}.

\begin{description}
    \item[Kuhn Poker:] A simplified n-player, zero-sum, sequential, imperfect information version of poker. It consists of $n+1$ playing cards. In each round of the game, every player remaining \emph{antes} one chip. One card is dealt to each player. Each player has two choices, \emph{bet} one chip or \emph{check}. If a player bets other players have the option to \emph{call} or \emph{fold}. Out of the players that bet, the one with the highest card wins. If all players check the player with the highest card wins. The original two-player game is described in \cite{kuhn1950_poker}. An n-player extension is described in \cite{lanctot2014_kuhn_multi}. Additional information about the game (such as equilibrium) can be found in \cite{hoehn2005_kuhn_info}.
    
    \item[Trade Comm:] A simple two-player, common-payoff trading game \cite{sokota2021_commonpayoff}. In this game each player (in secret) receives one of $I$ different items. The first player can then make one of $I$ utterances to the second agent, and vice versa. Then each agent chooses one of $I^2$ trades in private, if the trade is compatible both agents receive $1$ reward, otherwise both receive $0$. The goal of the agents is therefore to find a bijection between the items and utterances and the trade proposal. There are $I^4$ deterministic policies per player, and good learning algorithms will be be able to search over these policies. Because the game is common-payoff, it is very transitive, and has many dominated strategies, however there are multiple strategies with equal payoff, and therefore many equilibria in partially explored policy space. It is for this reason many learning algorithms get stuck exploiting sub-optimal policies they have already found.
    
    \item[Sheriff:] A simplified two-player, general-sum version of the board game Sheriff of Nottingham \cite{farina2019_sheriff}. The game consists of a smuggler, who is motivated to import contraband without getting caught, and a sheriff, who is motivated to either find contraband or accept bribes. The players negotiate a bribe over several rounds after which the bribe if accepted or rejected. If the sheriff finds contraband, the smuggler pays a fine, otherwise if no contraband is found the sheriff must pay compensation to the smuggler. The smuggler also gets value from smuggling goods. The game has different optimal values for NFCCE, EFCCE, EFCE, and NFCE solutions concepts.
\end{description}

\section{JPSRO Hyper-parameters}

There are a number of ways of implementing JPSRO in practice through various hyper-parameters.

\begin{description}
    \item[Best Response:] We use an exact best response calculation that assigns uniform probability over valid actions for states with zero reach probability. However, other best response approaches will also work including reinforcement learning (which we will leave to future work).

    \item[Pool Type:] The data structure used to store the policies found so far can either be a set or a multi-set. Using a set ensures that all policies are unique and only appear once even if multiple iterations produce the same best response policy. Some meta-solvers rely on repeated policies being present for convergence (for example, the uniform meta-solver can converge in two-player, zero-sum because the repeated policies trend to a NE over repeats). In this case using a multi-set is more suitable. This parameterization is only relevant when using tabular policies which can be checked for equality.
    
    \item[Player Updates Per Iteration:] It is not necessary to find the best response for all players at every iteration. Other strategies such as cycling through players or randomly selecting a player will work too. It is sufficient that over time all players should be updated. Updating a single player at a time is more efficient when minimizing the number of best responses necessary for convergence, however updating all can be done in parallel.
    
    \item[Best Responses Per Iteration:] When computing the CE best response, each player has several best responses to calculate. It is not necessary to compute them all and, even if they are all computed, it is not necessary to add them all to the pool of policies. The best responses can be calculated at random. And only best responses with nonzero gap need be added, or perhaps only the one with largest gap. In order to measure convergence to a CE, all best responses (and their gaps) must be computed. 
    
    \item[Policy Initialization:] Policies can be initialized in any manner and the algorithm will converge to an equilibrium under any initial condition. However, the initial policies does determine the space of equilibrium reachable (so for example is may not be possible to find the MWCE from all initial policies). JPSRO works, without limitation, using only deterministic policies, however stochastic policies are supported too. A stochastic uniform policy over valid actions is a reasonable setting.
    
    \item[Best Response Type:] The most important parameterization is picking one of the two best response types: CE and CCE. The resulting algorithm is named either JPSRO(CE) or JPSRO(CCE) respectively.
    
    \item[Meta-Solvers:] The second most important parameterization is the type of meta-solver to use (Table~\ref{tab:meta_solvers}). An important constraint is that JPSRO(CE) is only guaranteed to converge under CE meta-solvers. JPSRO(CCE) must use CCE meta-solvers (noting that CEs are a subset of CCEs).

\end{description}

\begin{table}[t]
\centering
\caption{Summary of meta-solvers used during experiments and their properties. We use the normalized $\epsilon$ for naming, for example $\frac{1}{100}\epsilon$-MGCE means $\frac{1}{100}\max(Ab)\epsilon$-MGCE.}

\scriptsize
\addtolength{\tabcolsep}{-1pt} 
\begin{tabular}{rcccccc}
\hline
\thead{Meta-Solver} & \thead{Joint} & \thead{CCE} & \thead{CE} & \thead{Max \\ Val} & \thead{Max \\ Ent} & \thead{Rand} \\ \hline
Uniform &  &  &  & & \checkmark &  \\
PRD &  &  &  &  &  &  \\
$\alpha$-Rank & \checkmark &  &  &  &  &  \\ \hline
Rand Dirichlet & \checkmark &  &  & & &  \checkmark \\
Rand Joint & \checkmark &  &  &  & &  \checkmark \\ \hline
RMWCCE & \checkmark & \checkmark &  & \checkmark & &  \checkmark \\
RVCCE & \checkmark & \checkmark &  &  & &  \checkmark \\
$\frac{1}{100}\epsilon$-MGCCE & \checkmark & $\epsilon$ &  & & \checkmark &   \\
MGCCE & \checkmark & \checkmark &  & & \checkmark &  \\
$\min\epsilon$-MGCCE & \checkmark & \checkmark &  & & \checkmark &   \\ \hline
RMWCE  & \checkmark & \checkmark & \checkmark  & \checkmark & &  \checkmark \\
RVCE & \checkmark & \checkmark & \checkmark  &  & &  \checkmark \\
$\frac{1}{100}\epsilon$-MGCE & \checkmark & $\epsilon$ & $\epsilon$  &  & \checkmark &   \\
MGCE & \checkmark & \checkmark & \checkmark  & & \checkmark &   \\
$\min\epsilon$-MGCE & \checkmark & \checkmark & \checkmark  &  & \checkmark &
\end{tabular}
\addtolength{\tabcolsep}{1pt}
\label{tab:meta_solvers}
\vskip -10pt
\end{table}

\section{Experiments}
\label{supp_sec:experiments}

We conduct experiments over three extensive form games to demonstrate the versatility of the algorithm over n-player general-sum games. For each game we run on both JPSRO(CCE) and JPSO(CE) algorithms under all suitable meta-solvers and baselines.

For JPSRO(CCE), we initialize using uniform policies, update all players at every iteration, and use multi-sets for the pool. For JPSRO(CE), we initialize using uniform policies, update all players at every iteration, only add the highest-gap BR to the pool for each player at each iteration, and use multi-sets for the pool. For random meta-solvers we repeat the experiment five times and show the average, otherwise the experiment is deterministic. The experiments were run for 6 hours, after which any that had not finished were truncated.

In order to measure performance, we track five metrics:
\begin{enumerate}
    \item The gap to equilibrium under a maximum welfare equilibrium (MW(C)CE) distribution. This describes how close the algorithm is to finding a set of joint policies that are in exact equilibrium in the extensive form game.
    \item The gap to equilibrium under the meta-solver's distribution. This is the gap that JPSRO theoretically converges to when using (C)CEs.
    \item The value of the game to the players under the MW(C)CE distribution.
    \item The value of the game to the players under the meta-solver's distribution.
    \item The number of unique policies found so far.
\end{enumerate}

Ultimately, the algorithm should be finding high-value joint policies that are in equilibrium, over a variety of games. The first game is a purely competitive, three-player game called Kuhn Poker (Figure~\ref{fig:joint_psro_kuhn_poker_supp}). The second game is a purely cooperative, common-payoff game called Trade Comm (Figure~\ref{fig:joint_psro_trade_comm_supp}). The final game is a general-sum game called Sheriff (Figure~\ref{fig:joint_psro_sheriff_supp}).

\section{Open Source Code}

An open source implementation of JPSRO is available in OpenSpiel \cite{lanctot2019_openspiel} under \url{https://github.com/deepmind/open_spiel/blob/master/open_spiel/python/examples/jpsro.py}.

\section{Necessity of Population Based Training}

In the absence of a correlating signal, a single joint policy is, in general, insufficient to represent a correlated equilibrium. To see this, let us consider the Traffic Light game (Figure \ref{fig:ce_tl_payoff}). One possible correlated equilibrium consists in recommending (G, W) half of the time, and (W,G) the other half.

Let us now consider this game as an extensive-form, partial-information game, where the row player first chooses their action, and the column player then chooses their own without knowing the action chosen by the row player. In the absence of a correlating signal, it is impossible for the column player to know which action the row player has played, and therefore playing (G, W) or (W, G) becomes impossible, as the column player is unable to change their action as a function of the action taken by the row player.

Therefore, without modifying the game and observation space to add a correlating signal, convergence to a correlated equilibrium necessarily requires a distribution over joint policies. Population Based Training (PBT), a set of methods that slowly grow the space of (joint) policies, therefore appears to be the appropriate framework to converge to (C)CEs without adding correlating signals to the considered game.

\begin{figure*}[p]
    \includegraphics[width=1.\textwidth]{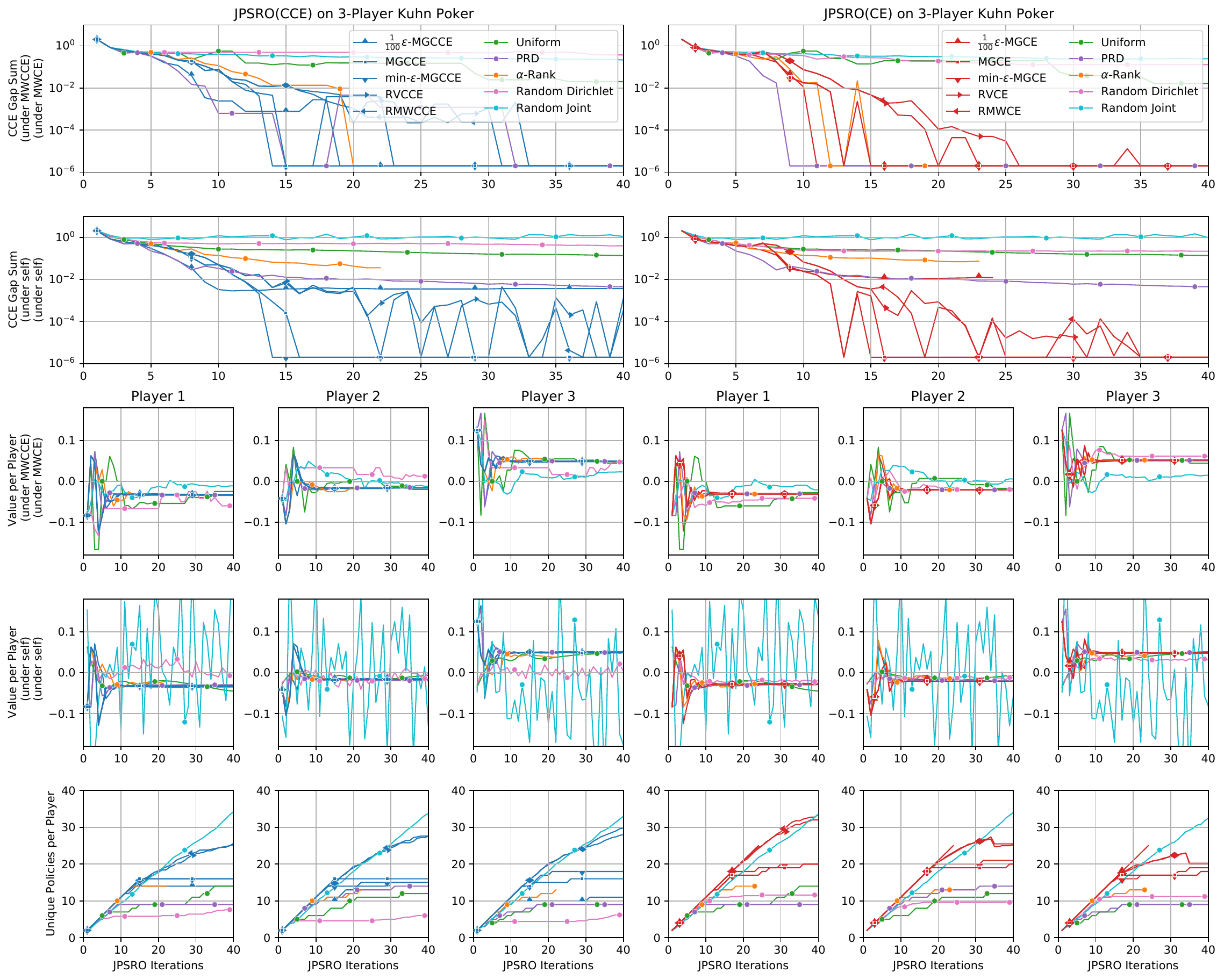}
    \caption{JPSRO(CCE) and JPSRO(CE) on three-player Kuhn Poker. All (C)CE MSs, PRD and $\alpha$-Rank find joint policies capable of supporting equilibrium (although $\alpha$-Rank was slow and was terminated after 6 hours). This is some evidence that classic MSs designed for the two-player, zero-sum setting can generalize well to the three-player, zero-sum.}
    \label{fig:joint_psro_kuhn_poker_supp}
\end{figure*}

\begin{figure*}[p]
    \includegraphics[width=1.\textwidth]{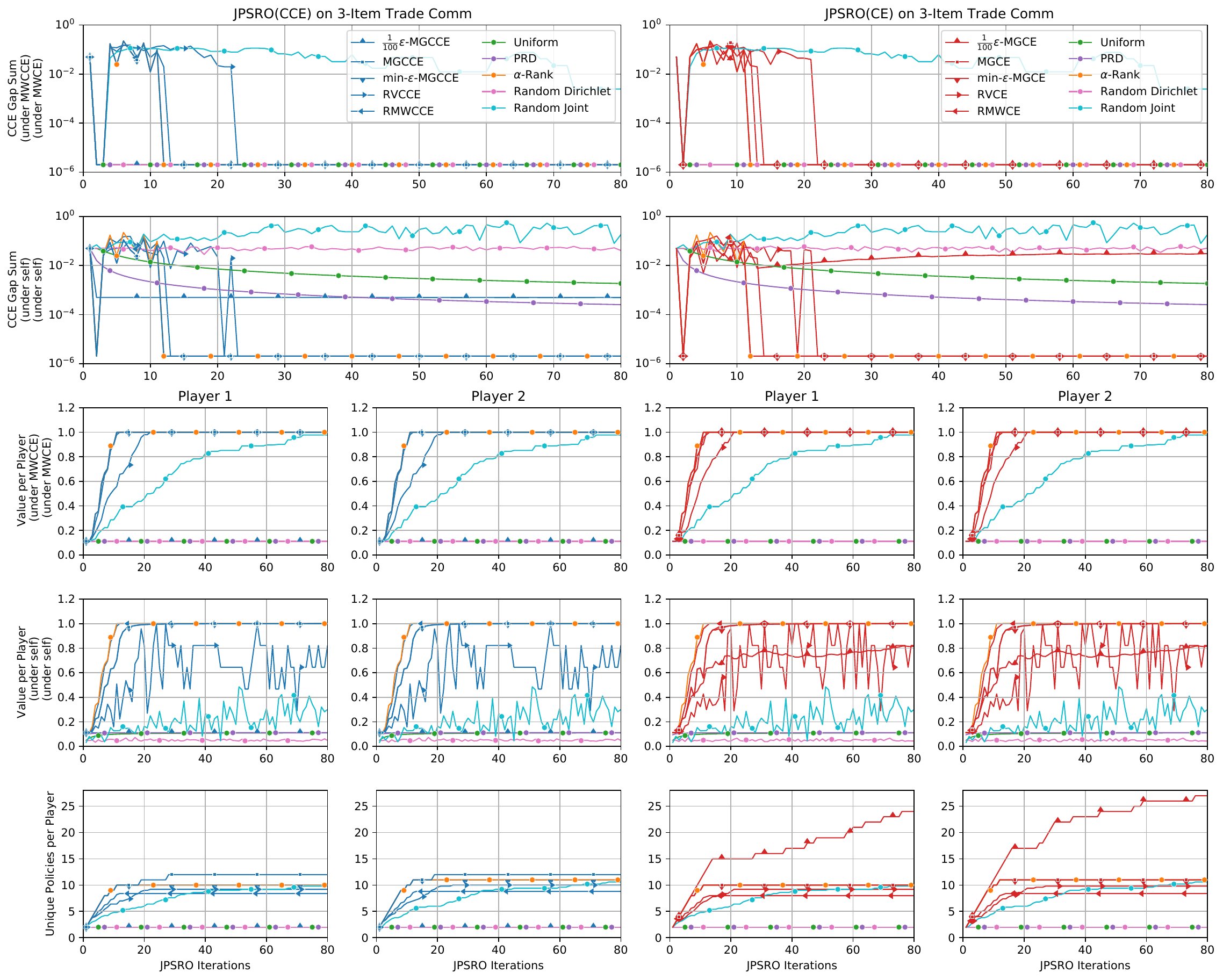}
    \caption{JPSRO(CCE) and JPSRO(CE) on three-item Trade Comm. In JPSRO(CCE), $\frac{1}{100}\min$-MGCCE fails to find the maximum welfare equilibrium, however, all other (C)CE MSs find the maximum welfare equilibrium. Unexpectedly, $\alpha$-Rank performs well on this game, while all other classic MSs fail to make progress on this purely cooperative game. Performing well on this game requires exploration, so the random joint MS is able to make progress, albeit naively and slowly.}
    \label{fig:joint_psro_trade_comm_supp}
\end{figure*}

\begin{figure*}[p]
    \includegraphics[width=1.\textwidth]{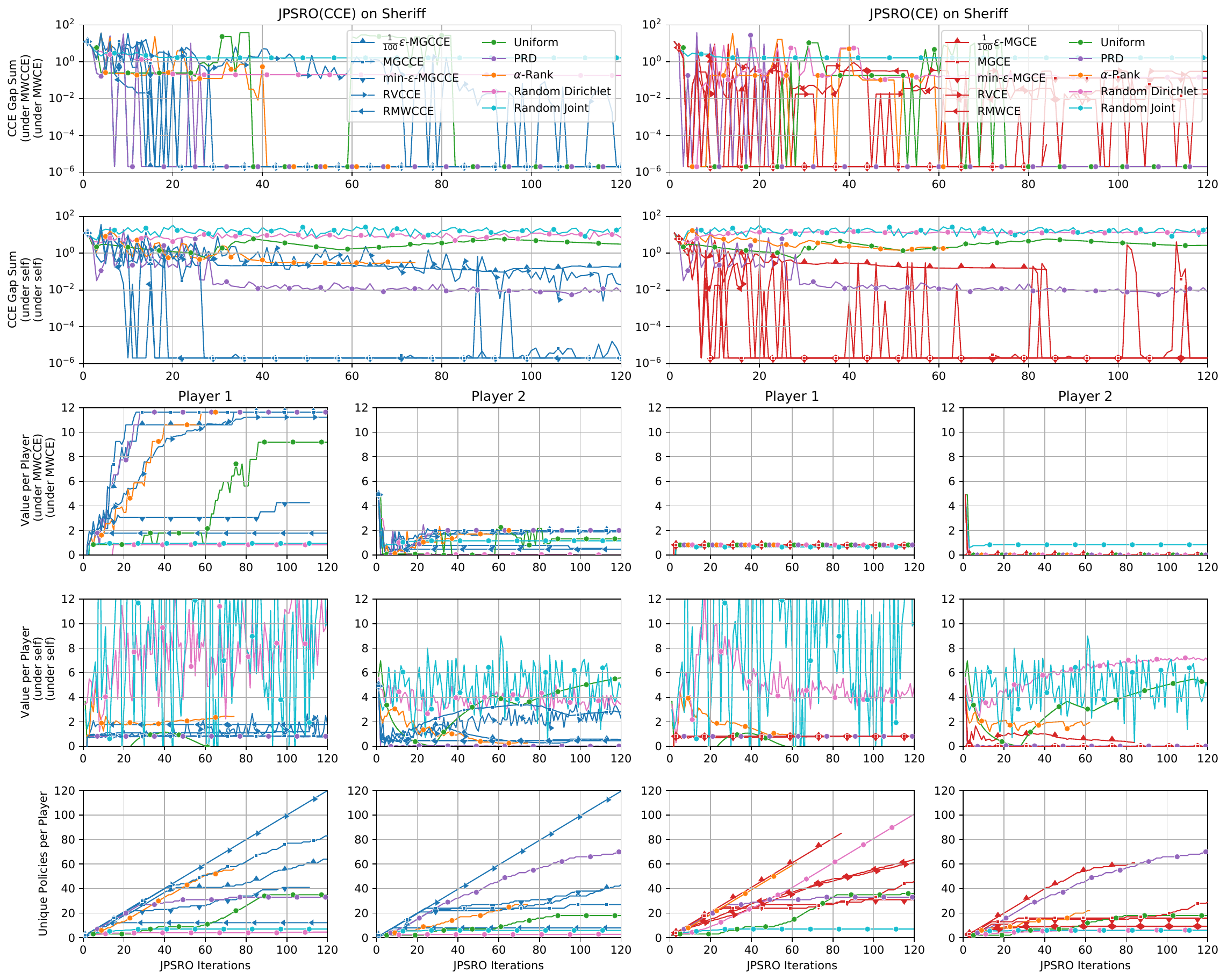}
    \caption{JPSRO(CCE) and JPSRO(CE) on Sheriff. This game is interesting because it is general-sum and different solution concepts have different optimal maximum welfare values. The maximum welfare NFCCE is $13.64$ for the smuggler and $2.0$ for the sheriff which JPSRO(CCE) successfully finds, while the maximum welfare NFCE is $0.82$ for the smuggler and $0.0$ for the sheriff which JPSRO(CE) successfully finds. This demonstrates the appeal of using NFCCE as a target equilibrium. Interestingly, for this game, $\frac{1}{100}\epsilon$-MG(C)CE was able to produce BRs of high enough quality to converge which is evidence that scaled methods that only approximate (C)CEs may be enough in some settings. RMWCCE converged to an equilibrium, but not the welfare maximizing one, providing evidence that greedy MSs are not always suitable. In a similar argument, $\min$-$\epsilon$-MGCCE did not reach the maximum welfare solution within the allocated number of iterations. RV(C)CE is efficient at finding novel policies but ones of limited utility. PRD and $\alpha$-Rank perform well and find the maximum welfare (C)CE equilibria.}
    \label{fig:joint_psro_sheriff_supp}
\end{figure*}

\end{document}